\newcommand{\braket}[2]{\left< #1 \vphantom{#2}\right|
\! \left. #2 \vphantom{#1} \right>}%
\title{The trigonometric $E_8$ $R$-matrix}
\author{Paul Zinn-Justin}
\address{Paul Zinn-Justin, School of Mathematics and Statistics, The University of Melbourne, 
Victoria 3010, Australia}
\email{pzinn@unimelb.edu.au}
\thanks{PZJ was supported by ARC grants FT150100232 and DP180100860.
He would like to thank A.~Kuniba for useful discussions and J.~Lamers for comments on the manuscript.
  %Computerized checks of the results of this paper were performed
  %with the help of Macaulay 2 \cite{M2}.
}
\newtheorem{lem}{Lemma}
\newcommand\rem[2][]{}
\newcommand\CC{{\mathbb C}}
\newcommand\RR{{\mathbb R}}
\newcommand\ZZ{{\mathbb Z}}
\newcommand\Uq{{\mathcal U}_q}
\newcommand\e{{\mathfrak e_8}}
\newcommand\ee{{\widehat{\mathfrak e_8}}}
\newcommand\cf[1]{\mathrm{cf}\!\left[#1\right]}
\newcommand\cR{{\check R}}
\newcommand\height{\mathrm{ht}}
\tikzset{Rmat/.style={circle,fill=black,inner sep=0.15cm}}
\tikzset{empty/.style={circle,solid,thin,sharp corners,draw=black,inner sep=1.5pt,fill=white}}
\tikzset{vertex/.style={circle,solid,thin,sharp corners,draw=black,inner sep=1.5pt,fill=red!50!blue}}
\tikzset{dvertex/.style={ellipse,solid,thin,sharp corners,draw=black,inner xsep=3pt,inner ysep=1.5pt,fill=red!50!blue}}
\tikzset{circle split part fill/.style  args={#1,#2}{%
 alias=tmp@name,
  postaction={%
    insert path={
     \pgfextra{% 
     \pgfpointdiff{\pgfpointanchor{\pgf@node@name}{center}}%
                  {\pgfpointanchor{\pgf@node@name}{east}}%            
     \pgfmathsetmacro\insiderad{\pgf@x}
      \fill[#1] (\pgf@node@name.base) ([xshift=-\pgflinewidth]\pgf@node@name.east) arc
                          (0:180:\insiderad-\pgflinewidth)--cycle;
      \fill[#2] (\pgf@node@name.base) ([xshift=\pgflinewidth]\pgf@node@name.west)  arc
                           (180:360:\insiderad-\pgflinewidth)--cycle;                    }}}}}  
\tikzset{halfvertex/.style={circle,solid,thin,sharp corners,draw=black,inner sep=1pt,circle split,circle split part fill={white,red!50!blue}}}%
\def\bb{}
\definecolor{bgcolor}{rgb}{1,0.97,1}
\tikzset{every picture/.style={inner frame sep=3pt,background rectangle/.style={rounded corners,fill=bgcolor},show background rectangle}}
\newcommand\crossing[1][1]{\tikz[baseline=-3pt,scale=#1,inner frame sep=#1*3pt]{\bb
\draw (-1,1) node[vertex] {} -- (-0.1,0.1) (0.1,-0.1) -- (1,-1)  node[vertex] {};
\draw (-1,-1) node[vertex] {} -- (1,1)  node[vertex] {};
}}
\newcommand\invcrossing[1][1]{\tikz[baseline=-3pt,scale=#1,inner frame sep=#1*3pt]{\bb
\draw (1,1) node[vertex] {} -- (0.1,0.1) (-0.1,-0.1) -- (-1,-1)  node[vertex] {};
\draw (1,-1) node[vertex] {} -- (-1,1)  node[vertex] {};
}}
\newcommand\flatcrossing[1][1]{\tikz[baseline=-3pt,scale=#1,inner frame sep=#1*3pt]{\bb
\draw (1,1) node[vertex] {} -- (-1,-1)  node[vertex] {};
\draw (1,-1) node[vertex] {} -- (-1,1)  node[vertex] {};
}}
\newcommand\tri[1][1]{\tikz[baseline=1cm]{\draw[bend left=60] (1,1) node[vertex] {} to (-1,1) node[vertex] {}; \draw (0,1) node[vertex] {} -- (0,0.5);}}
\newcommand\triinv[1][1]{\tikz[baseline=0cm,scale=-#1]{\draw[bend left=60] (1,1) node[vertex] {} to (-1,1) node[vertex] {}; \draw (0,1) node[vertex] {} -- (0,0.5);}}
\newcommand\mycap[1][1]{\tikz[baseline=-3pt]{\draw[bend left=60] (-1,0) node[vertex] {} to (1,0) node[vertex] {};}}
\newcommand\mycup[1][1]{\tikz[baseline=-3pt]{\draw[bend left=60] (1,0) node[vertex] {} to (-1,0) node[vertex] {};}}
\newcommand\D[1][1]{\tikz[baseline=0cm,scale=#1,inner frame sep=#1*3pt]{
\draw (-1,-1) node[vertex] {} -- (0,-0.2) -- (1,-1) node[vertex] {};
\draw (0,-0.2) -- (0,1) node[vertex] {};
}}
\newcommand\U[1][1]{\tikz[baseline=0cm,scale=#1,inner frame sep=#1*3pt]{
%\draw (-1,-1) -- (0,-0.5) -- (1,-1);
\draw (0,-1) node[vertex] {} -- (0,0.2);
\draw (-1,1) node[vertex] {} -- (0,0.2) -- (1,1) node[vertex] {};
}}
\newcommand\full[1][1]{\tikz[baseline=-3pt,scale=#1,inner frame sep=#1*3pt]{\bb\draw[bend right=45] (-1,-1) node[vertex] {} to (-1,1) node[vertex] {} (1,1) node[vertex] {} to (1,-1) node[vertex] {};}}
\newcommand\identity[1][1]{\tikz[baseline=-3pt,scale=#1,inner frame sep=#1*3pt]{\bb\draw[bend right=45,dashed] (-1,-1) node[halfvertex] {} to (-1,1) node[halfvertex] {} (1,1) node[halfvertex] {} to (1,-1) node[halfvertex] {};}}
\newcommand\cupcap[1][1]{\tikz[baseline=-3pt,scale=#1,inner frame sep=#1*3pt]{\bb\draw[bend left=45] (-1,-1) node[vertex] {} to (1,-1) node[vertex] {} (1,1) node[vertex] {} to (-1,1) node[vertex] {} ;}}
\newcommand\vertical[1][1]{\tikz[baseline=-3pt,scale=#1,inner frame sep=#1*3pt]{\bb
\draw (-1,-1) node[vertex] {} -- (0,-0.5) -- (1,-1) node[vertex] {} ;
\draw (-1,1) node[vertex] {} -- (0,0.5) -- (1,1) node[vertex] {};
\draw (0,-0.5) -- (0,0.5);
}}
\newcommand\horizontal[1][1]{\tikz[baseline=-3pt,scale=#1,inner frame sep=#1*3pt,rotate=90]{\bb
\draw (-1,-1) node[vertex] {} -- (0,-0.5) -- (1,-1) node[vertex] {} ;
\draw (-1,1) node[vertex] {} -- (0,0.5) -- (1,1) node[vertex] {} ;
\draw (0,-0.5) -- (0,0.5);
%\draw (-1,-1) -- (0,0) -- (-1,1);
%\draw (2,-1) -- (1,0) -- (2,1);
%\draw (0,0) -- (1,0);
}}
\newcommand\mysquare[1][1]{\tikz[baseline=-3pt,scale=#1,inner frame sep=#1*3pt]{\bb
\draw (-1,-1) node[vertex] {} -- (-0.5,-0.5) -- (-0.5,0.5) -- (-1,1) node[vertex] {};
\draw (1,-1) node[vertex] {} -- (0.5,-0.5) -- (0.5,0.5) -- (1,1) node[vertex] {};
\draw (-0.5,-0.5) -- (0.5,-0.5) (-0.5,0.5) -- (0.5,0.5);
}}
\newcommand\doublesquare[1][1]{\tikz[baseline=-3pt,scale=#1,inner frame sep=#1*3pt]{\bb
\draw (-1,-1) node[vertex] {} -- (-0.5,-0.5) -- (0.5,-0.5) -- (1,-1) node[vertex] {};
\draw (-1,1) node[vertex] {} -- (-0.5,0.5) -- (0.5,0.5) -- (1,1) node[vertex] {};
\draw (-0.5,-0.5) -- (-0.25,0) -- (-0.5,0.5) (0.5,-0.5) -- (0.25,0) -- (0.5,0.5) (-0.25,0) -- (0.25,0);
}}
\newcommand\leftid[1][1]{\tikz[baseline=-3pt,scale=#1,inner frame sep=#1*3pt]{\bb
\draw[bend right=45] (-1,-1) node[vertex] {} to (-1,1) node[vertex] {} (1,1);
\node[empty] at (1,-1) {};
\node[empty] at (1,1) {};
}}
\newcommand\diagida[1][1]{\tikz[baseline=-3pt,scale=#1,inner frame sep=#1*3pt]{\bb
\draw (-1,1) node[vertex] {} to (1,-1) node[vertex] {};
\node[empty] at (-1,-1) {};
\node[empty] at (1,1) {};
}}
\newcommand\diagidb[1][1]{\tikz[baseline=-3pt,scale=#1,inner frame sep=#1*3pt]{\bb
\draw (1,1) node[vertex] {} to (-1,-1) node[vertex] {};
\node[empty] at (1,-1) {};
\node[empty] at (-1,1) {};
}}
\newcommand\rightid[1][1]{\tikz[baseline=-3pt,scale=#1,inner frame sep=#1*3pt]{\bb
\draw[bend right=45] (1,1) node[vertex] {} to (1,-1) node[vertex] {};
\node[empty] at (-1,-1) {};
\node[empty] at (-1,1) {};
}}
\newcommand\Yaa[1][1]{\tikz[baseline=-3pt,scale=#1,inner frame sep=#1*3pt]{\bb
\draw (-1,1) node[vertex] {} to (1,-1) node[vertex] {};
\node[empty] at (-1,-1) {};
\draw (1,1) node[vertex] {} -- (0,0);
}}
\newcommand\Yba[1][1]{\tikz[baseline=-3pt,scale=#1,inner frame sep=#1*3pt]{\bb
\draw (1,1) node[vertex] {} to (-1,-1) node[vertex] {};
\node[empty] at (1,-1) {};
\draw (-1,1) node[vertex] {} -- (0,0);
}}
\newcommand\Yab[1][1]{\tikz[baseline=-3pt,scale=#1,inner frame sep=#1*3pt]{\bb
\draw (-1,1) node[vertex] {} to (1,-1) node[vertex] {};
\node[empty] at (1,1) {};
\draw (-1,-1) node[vertex] {} -- (0,0);
}}
\newcommand\Ybb[1][1]{\tikz[baseline=-3pt,scale=#1,inner frame sep=#1*3pt]{\bb
\draw (1,1) node[vertex] {} to (-1,-1) node[vertex] {};
\node[empty] at (-1,1) {};
\draw (1,-1) node[vertex] {} -- (0,0);
}}
\newcommand\capa[1][1]{\tikz[baseline=-3pt,scale=#1,inner frame sep=#1*3pt]{\bb
\draw[bend left=45] (-1,-1) node[vertex] {} to (1,-1) node[vertex] {};
\node[empty] at (-1,1) {};
\node[empty] at (1,1) {};
}}
\newcommand\cupa[1][1]{\tikz[baseline=-3pt,scale=#1,inner frame sep=#1*3pt]{\bb
\draw[bend left=45] (1,1) node[vertex] {} to (-1,1) node[vertex] {} ;
\node[empty] at (-1,-1) {};
\node[empty] at (1,-1) {};
}}
\newcommand\myempty[1][1]{\tikz[baseline=-3pt,scale=#1,inner frame sep=#1*3pt]{\bb
\node[empty] at (-1,1) {};
\node[empty] at (1,1) {};
\node[empty] at (-1,-1) {};
\node[empty] at (1,-1) {};
}}
\newcommand\dbl[1][1]{\tikz[baseline=0cm,scale=#1,inner frame sep=#1*3pt]{
%\draw (-1,-1) node[vertex] {} -- (-0.1,-0.2) -- (-0.1,1);
\draw (-1,-1) node[vertex] {} -- (0,-0.2) -- (1,-1) node[vertex] {};
%\draw (1,-1) node[vertex] {} -- (0.1,-0.2) -- (0.1,1);
\draw[double distance=4pt,double=bgcolor,black] (0,1) node[dvertex] {} -- (0,-0.3);
}}
\begin{document}
\begin{abstract}
An expression for the $R$-matrix associated to $\Uq(\ee)$ in its $249$-dimensional representation
is given using the diagrammatic calculus of $\Uq(\e)$ invariants.
\end{abstract}

\maketitle

\section{Introduction}
Quantized affine algebras are pseudo-triangular Hopf algebras \cite{Drinfeld-QG}; 
as such, given any pair
of ``generic'' representations $V_1$ and $V_2$, there exists an intertwiner between $V_1\otimes V_2$ and $V_2\otimes V_1$,
the so-called $R$-matrix. $R$-matrices are the central object in quantum integrable systems,
and for most algebras and many low-dimensional representations, they are known explicitly
\cite{KRS-YBE,ZamFat,IK-SM,Bazh-Lie,Jimbo-Rmat,Jimbo-YBE,Ma-YBE,Kuniba-G2,DGZ-YBE,Manga-R}. 
As far as the author knows, the only
case that has not been studied in detail is that of (the quantized affine algebra of) $\e$. 
The goal of this short paper is therefore
rather modest: it is the explicit computation of the $R$-matrix associated
to the quantized loop algebra $\Uq(\ee)$ in its lowest nontrivial representation, which is of dimension
$249$ (and is {\em not}\/ irreducible for the nonaffine algebra $\Uq(\e)$, which is a source of complication). 
In order to to do so, we develop a natural diagrammatic language for the theory of invariants of $\Uq(\e)$
based on tensors of the adjoint representation, which we then apply to the computation of the $R$-matrix.
The latter is fixed (up to normalization) 
by solving the equations that require it to be an intertwiner \cite{Jimbo-qg-YBE}.

The main challenge of this paper is computational: for instance, the $R$-matrix is a matrix
of size $249^2=62001$, so altogether an array of approximately 4 billion entries (albeit a sparse array, since only
$0.05\%$ of these are nonzero). This means that many computations in this paper cannot be performed by hand,
and the help of a symbol computation program is necessary. (The author's code is available on request.)
In particular, quite a few results are presented below without proof; if so, the reader should assume that
they are the result of a computer-assisted calculation.

The immediate motivation for this paper came from the work of the author in collaboration with
A.~Knutson \cite{artic71,SchubII} where the $R$-matrix of $\Uq(\ee)$ appeared unexpectedly
in the computation of structure constants of the $K$-theory of 4-step flag varieties.
Another possible source of interest is the fact that the scaling limit of the Ising model at the 
critical temperature in a magnetic field
is known to be related to an $\e$ integrable field theory \cite{Zamo-e8,BNW-e8}, and a vertex lattice model
based on $\e$ might be desirable.

The paper is organized as follows. \S\ref{sec:prelim} contains basic definitions related
to the Hopf algebra $\Uq(\ee)$. \S\ref{sec:nonaff} presents the diagrammatic calculus for $\Uq(\e)$
invariants. $\S\ref{sec:aff}$ is the core of the paper: the $249$-dimensional
representation of $\Uq(\ee)$ is defined, the
trigonometric $R$-matrix introduced and its main properties discussed. 
Finally, appendix \ref{app:mult}
contain some cumbersome diagrammatic identities, 
appendix \ref{app:main} is the full expression of the $R$-matrix,
and appendix \ref{app:rat} discusses the rational limit
of the $R$-matrix.

Intentionally, this paper has very concrete formulae which we hope will be useful in future
studies of the quantum integrable systems associated to $\e$.

\vfill\eject

\section{Preliminaries}\label{sec:prelim}
\subsection{Dynkin diagrams}\label{sec:dynkin}
The Dynkin diagram of $\ee$ is
\begin{equation}\label{eq:dynkin}
\tikz[baseline=0,every node/.style={circle,fill=blue,draw=black,inner sep=3pt},background rectangle/.style={}]{
\draw (0,0) -- (7,0) (5,0) -- (5,1) node[label={above:$8$}] {};
\foreach\i in {1,...,7} \node[label={below:$\i$}] at (\i,0) {};
\node[fill=green,label={below:$0$}] at (0,0) {};
}
\end{equation}
Its adjacency matrix $A$ uses the index set
$I=\{0,1,\ldots,8\}$ as on the figure. Let $C=2-A$ be the corresponding Cartan matrix.

If one removes the (green) vertex labeled $0$ in \eqref{eq:dynkin}, one obtains the Dynkin
diagram of $\e$.

\subsection{The $\e$ root system}\label{sec:rootsys}
Let $\Phi=\Phi_+\sqcup \Phi_-$ be the root system of $\e$, divided into
positive and negative roots; $|\Phi|=240$.
Let $Q$ be the root lattice (generated by $\Phi$). 
Let $\alpha_1,\ldots,\alpha_8\in \Phi_+$ be the set of simple roots.

There is a nondegenerate scalar product $\braket{\cdot}{\cdot}$
on $Q\otimes_{\ZZ} \RR \cong \RR^8$ (inverse of the Killing form on the Cartan subalgebra of $\e$), 
such that $\Phi$ is exactly the subset
of $Q$ of vectors of square length $2$. More generally, one has, for $\beta,\gamma\in\Phi$,
$\braket{\beta}{\gamma}\in \{-2,-1,0,+1,+2\}$ with $\braket{\beta}{\gamma}=\pm2$
iff $\beta=\pm\gamma$. 

The dual basis of the simple roots $\alpha_i$ is that of fundamental weights $\omega_i$:
$\braket{\alpha_i}{\omega_j}=\delta_{i,j}$.

Define the sequence of integers
\begin{equation}\label{eq:n}
(n_0,n_1,\ldots,n_8)=(1,2,3,4,5,6,4,2,3)
\end{equation}
The highest root of $\e$ (highest weight of the adjoint representation,
and therefore a fundamental weight) is given by $\omega_1=\sum_{i=1}^8 n_i\alpha_i$. 
Alternatively, define $\alpha_0=-\omega_1$ to be the lowest root of $\e$; it then satisfies
\[
\sum_{i=0}^8 n_i \alpha_i=0
\]
The Cartan matrix of \S\ref{sec:dynkin} encodes the scalar products: $C_{ij}=\braket{\alpha_i}{\alpha_j}$
for $i,j\in I$ (including the value $0$).

We need three more definitions.

Given $\beta\in\Phi$, define $\height(\beta)$ to be the sum of its entries in the basis of simple roots,
that is, $\height(\beta)=\braket{\rho}{\beta}$ where $\rho=\sum_{i=1}^8 \omega_i=\frac{1}{2}\sum_{\beta\in \Phi_+}\beta$.

Denote $\Phi^3_0$ the set of triples $(\beta,\gamma,-\beta-\gamma)\in\Phi^3$ that
sum to zero.
Define the binary relation $\to$ on $\Phi^3_0$ by
$(\beta,\gamma,-\beta-\gamma)\to (\beta+\alpha_i,\gamma-\alpha_i,-\beta-\gamma)$,
$(\beta,\gamma,-\beta-\gamma)\to (\beta,\gamma+\alpha_i,-\beta-\gamma-\alpha_i)$ and
$(\beta,\gamma,-\beta-\gamma)\to (\beta+\alpha_i,\gamma,-\beta-\gamma-\alpha_i)$
for $i=1,\ldots,8$. Given $x\in\Phi^3_0$, write $\ell(x)$ for the maximum of lengths $k$ of chains
$x\to x_1 \to\cdots\to x_{k}$ starting from $x$.
\rem[gray]{is there a better definition?}

Finally, define a map $\epsilon: \Phi^2 \to \{-1,+1\}$ by
\[
\epsilon(\beta,\gamma)=\prod_{\substack{1\le i<j\le 8\\[1pt]
%\braket{\alpha_i}{\alpha_j}
C_{ij}
=-1}} (-1)^{\braket{\alpha_i}{\beta}\braket{\alpha_j}{\gamma}}
\]
Note that it satisfies
\[
\epsilon(\beta,\gamma)\epsilon(\gamma,\beta)=(-1)^{\braket{\beta}{\gamma}}
\]
%\rem{other properties?}

\subsection{The quantized affine algebra $\Uq(\ee)$}
In all that follows, $q$ is a ``generic'' nonzero complex number, i.e., not a root of unity.
 
%Let $\mathfrak g$ be either the simple Lie algebra $\e$ or the untwisted affine Lie algebra
%$\ee$. Let $A$ be the adjacency matrix of the corresponding Dynkin diagrams, namely
Introduce the notation for $q$-numbers% and $q$-factorials:
\[
[n]=\frac{q^{n}-q^{-n}}{q-q^{-1}}
%\qquad
%\qfac{n}{k}=\frac{[n]}{[k][n-k]}
\]

The quantized affine algebra $\Uq(\ee)$ is given by
generators $\{e_i,f_i,k_i^{\pm},\ i\in I\}$ and relations (see e.g.~\cite[Chap.~12]{CP-book})
\begin{align}\notag
&k_i k_j=k_jk_i
&&
[e_i,f_j] = \delta_{ij} \frac{k_i-k_i^{-1}}{q-q^{-1}}
\\
&k_i e_j k_i^{-1} = q^{C_{ij}} e_j
&&
k_i f_j k_i^{-1} = q^{-C_{ij}} f_j
\\\notag
&e_ie_j=e_je_i&& f_if_j=f_jf_i&&\text{if }C_{ij}=0
\\\notag
&e_ie_j^2-[2]e_je_ie_j+e_j^2e_i=0
&&
f_if_j^2-[2]f_jf_if_j+f_j^2f_i=0
&&
\text{if }C_{ij}=-1
% \sum_{k=0}^{1-C_{ij}} 
% (-1)^k \qfac{1-C_{ij}}{k}
% e_i^{1-C_{ij}-k} e_j e_i^k = 0
% \\\notag
% \sum_{k=0}^{1-C_{ij}} 
% (-1)^k \qfac{1-C_{ij}}{k}
% f_i^{1-C_{ij}-k} f_j f_i^k = 0
\end{align}
for all $i,j\in I$.

The coproduct, antipode and counit are defined by
\begin{align}\notag\label{eq:co}
\Delta(e_i) &= e_i \otimes 1 + k_i \otimes e_i
&
S(e_i)&=-k_i^{-1}e_i
&
\varepsilon(e_i)&=0
\\
\Delta(f_i) &= f_i \otimes k_i^{-1} + 1 \otimes f_i
&
S(f_i)&=-f_ik_i
&
\varepsilon(f_i)&=0
\\\notag
\Delta(k_i) &= k_i \otimes k_i
&
S(k_i)&=k_i^{-1}
&
\varepsilon(k_i)&=1
\end{align}

The Cartan matrix being degenerate has two consequences. Firstly,
there is a central element $\prod_{i=0}^8 k_i^{n_i}$ where the integers $n_i$ were defined in \eqref{eq:n}.
In all representations that we consider, this element is sent to $1$.

Secondly, there is a nontrivial
gradation on $\Uq(\ee)$: we choose
$e_0$ to have degree $+1$, $f_0$ degree $-1$ and all other generators degree $0$.

If one restricts the index set of the generators to
$\{1,\ldots,8\}$, one obtains the quantized algebra
$\Uq(\e)$; it is naturally a (horizontal, or degree $0$) sub-Hopf algebra of $\Uq(\ee)$.

\section{\texorpdfstring{$\Uq(\e)$}{Uq(e8)} invariants}\label{sec:nonaff}
We first study the non-affine algebra $\Uq(\e)$. We shall be interested in two representations of it: the trivial representation,
which is simply given by the counit in \eqref{eq:co}, and the 248-dimensional representation (a deformation of the adjoint representation
of $\e$), which we discuss now.

\subsection{The $248$-dimensional representation}\label{sec:248}
The definition is a straightforward $q$-deformation of the adjoint action of $\e$.
$V:=V_{\omega_1}$ is a $248$-dimensional vector space with basis $v_\beta$, $\beta\in\Phi$ and $u_j$, $j=1,\ldots,8$.
The action of $\Uq(\e)$ on it is given by
\begin{align*}
k_i v_\beta &= q^{\braket{\alpha_i}{\beta}} v_\beta
&
k_i u_j &= u_j
\\
e_i v_\beta &= \begin{cases}
\epsilon(\alpha_i,\beta) v_{\beta+\alpha_i}&\beta+\alpha_i\in\Phi
\\
u_i&\beta+\alpha_i=0
\\
0&\text{else}
\end{cases}
&
e_i u_j &= -[C_{ij}] v_{\alpha_i} 
\\
f_i v_\beta &= \begin{cases}
\epsilon(-\alpha_i,\beta) v_{\beta-\alpha_i}&\beta-\alpha_i\in\Phi
\\
-u_i&\beta-\alpha_i=0
\\
0&\text{else}
\end{cases}
&
f_i u_j &= [C_{ij}] v_{-\alpha_i} 
\end{align*}
for $i=1,\ldots,8$.

Because the representation ring of $\Uq(\e)$ (for generic $q$) is isomorphic to
that of $\e$, various results are known. For example, the adjoint representation
of $\e$ has an invariant bilinear form (the Killing form) and an invariant trilinear
form (built out of the bracket). The same is true of $V$:

\begin{lem}
The following bilinear form $B$
is an $\Uq(\e)$ invariant:
\begin{align*}
B(v_\beta\otimes v_{-\beta})&=\epsilon(\beta,\beta)q^{1-\height(\beta)}
\\
B(u_i\otimes u_j)&=[C_{ij}]
\end{align*}
all other entries being zero (as implied by their nonzero weight).
\end{lem}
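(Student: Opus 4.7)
The plan is to verify the intertwiner condition $B\circ\Delta(x)=\varepsilon(x)B$ for each Chevalley generator $x\in\{k_i,e_i,f_i\}$ acting on basis vectors, using the weight grading of $V$ to cut the case analysis down to size. Invariance under $k_i$ is immediate: every nonzero entry of $B$ pairs vectors of opposite weight, so $k_i\otimes k_i$ acts as $1$ on these pairs. What remains are the relations
\[
B(e_iv\otimes w)+B(k_iv\otimes e_iw)=0,\qquad B(f_iv\otimes k_i^{-1}w)+B(v\otimes f_iw)=0
\]
for $i=1,\ldots,8$ and all basis pairs $(v,w)$.

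I would then reduce the $e_i$-identity by requiring the total weight of $(v,w)$ to equal $-\alpha_i$; this leaves three families: (A) $v=v_\beta$, $w=v_{-\beta-\alpha_i}$ with $\beta,\beta+\alpha_i\in\Phi$ and $\beta+\alpha_i\ne 0$; (B) $v=v_{-\alpha_i}$, $w=u_j$; (C) $v=u_j$, $w=v_{-\alpha_i}$. Cases (B) and (C) are one-line computations using $\epsilon(\alpha_i,\alpha_i)=1$ (from the definition of $\epsilon$) and $\height(\pm\alpha_i)=\pm1$, together with $[C_{ji}]=[C_{ij}]$.

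The main case is (A), which splits into a $q$-power check and a sign check. For the $q$-powers, one needs $\braket{\beta}{\alpha_i}=-1$ whenever $\beta$ and $\beta+\alpha_i$ are both roots with $\beta\ne -\alpha_i$; this is a standard feature of simply-laced root systems and makes $\height(\beta+\alpha_i)-\height(\beta)=1$ match the $k_i$-factor $q^{\braket{\alpha_i}{\beta}}$. For the signs, the $\ZZ/2$-bilinearity of $\epsilon$, clear from its definition, lets one expand both $\epsilon(\beta+\alpha_i,\beta+\alpha_i)$ and $\epsilon(\alpha_i,-\beta-\alpha_i)$ in terms of $\epsilon$-values on pairs from $\{\alpha_i,\beta\}$; the two resulting products then differ by exactly $\epsilon(\beta,\alpha_i)\epsilon(\alpha_i,\beta)=(-1)^{\braket{\beta}{\alpha_i}}=-1$, recalled in \S\ref{sec:rootsys}, giving the desired cancellation.

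The $f_i$-invariance is proved by the mirror analysis with total weight $+\alpha_i$, using the coproduct $\Delta(f_i)=f_i\otimes k_i^{-1}+1\otimes f_i$. The only obstacle I anticipate is bookkeeping: tracking the $q$-exponents from the various $k_i^{\pm 1}$ factors and the signs produced by the four $\epsilon$-symbols requires care, but all simplifications reduce to the bilinearity of $\epsilon$ and the elementary facts about $\Phi$ stated in \S\ref{sec:rootsys}.
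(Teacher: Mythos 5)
Your proposal is correct and follows essentially the same route as the paper: $k_i$-invariance from the weight pairing, reduction of the $e_i$-condition to the same three weight-$(-\alpha_i)$ families, cancellation in the root--root case via $\epsilon(\beta,\alpha_i)\epsilon(\alpha_i,\beta)=(-1)^{\braket{\beta}{\alpha_i}}=-1$ together with $\braket{\alpha_i}{\beta}=-1$ (which the paper phrases as $|\beta+\alpha_i|^2=|\beta|^2=|\alpha_i|^2=2$), and the two one-line mixed cases. The only cosmetic difference is that the paper dispenses with the $f_i$-check by noting it is implied by $k_i,e_i$-invariance in finite dimension, whereas you redo it by the mirror computation; either is fine.
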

%\rem[gray]{cup is the bilinear form, cap is in tensor product}
\begin{proof}
The $k_i$ invariance is trivial. Let us check the $e_i$ invariance (the $f_i$ invariance works similarly, and is in fact implied by $k_i,e_i$
invariance in finite dimension). There are three nontrivial cases to consider: either one starts with $v_\beta\otimes v_\gamma$,
in which case for its image under $Be_i$ to be nonzero one must have $\gamma=-\beta-\alpha_i$; then
\begin{align*}
B(e_i (v_\beta\otimes v_{-\beta-\alpha_i}))&=
B(\epsilon(\alpha_i,\beta)v_{\beta+\alpha_i}\otimes v_{-\beta-\alpha_i}
+q^{\braket{\alpha_i}{\beta}}\epsilon(\alpha_i,-\beta-\alpha_i)v_\beta\otimes v_{-\beta})
\\
&=\epsilon(\alpha_i,\beta)\epsilon(\beta+\alpha_i,\beta+\alpha_i)q^{1-\height(\beta+\alpha_i)}+q^{\braket{\alpha_i}{\beta}}\epsilon(\alpha_i,-\beta-\alpha_i)\epsilon(\beta,\beta)q^{1-\height(\beta)}
\\
&=\epsilon(\alpha_i,\beta)\epsilon(\beta,\beta)q^{-\height(\beta)}(1-q^{1+\braket{\alpha_i}{\beta}})
\\
&=\epsilon(\alpha_i,\beta)\epsilon(\beta,\beta)q^{-\height(\beta)}(1-q^{1+\frac{1}{2}(|\beta+\alpha_i|^2-|\beta|^2-|\alpha_i|^2)})
\\
&=0
\end{align*}
where in the last line we used the fact that all roots of $\Phi$ have square length $2$.

Similarly, the other two cases are
\begin{align*}
B(e_i(u_j\otimes v_{-\alpha_i}))
&=
B(-[C_{ij} v_{\alpha_i}\otimes v_{-\alpha_i}) + u_j\otimes u_i)
\\
&=
-[C_{ij}]q^{1-\height(\alpha_i)} + [C_{ij}]
\\
&=0
\end{align*}
and
\begin{align*}
B(e_i(v_{-\alpha_i}\otimes u_j ))
&=
B(u_i\otimes u_j - q^{-\braket{\alpha_i}{\alpha_i}} v_{-\alpha_i}\otimes [C_{ij}] v_{\alpha_i})
\\
&=[C_{ij}]-q^{-2} q^{1-\height(-\alpha_i)} [C_{ij}]
\\
&=0
\end{align*}

% \rem{...} $\beta\ne -\gamma$, $\beta,\gamma,\beta+\gamma\ne -\alpha_i$.
% \begin{align*}
% T(e_i(v_\beta\otimes v_\gamma \otimes v_{-\beta-\gamma-\alpha_i}))
% &=
% T\big(
% \epsilon(\alpha_i,\beta) v_{\beta+\alpha_i}\otimes v_\gamma \otimes v_{-\beta-\gamma-\alpha_i}
% +q^{\braket{\alpha_i}{\beta}}\epsilon(\alpha_i,\gamma) v_\beta\otimes v_{\gamma+\alpha_i}\otimes v_{-\beta-\gamma-\alpha_i}
% \\
% &\hspace{2cm}
% +q^{\braket{\alpha_i}{\beta+\gamma}}\epsilon(\alpha_i,-\beta-\gamma-\alpha_i) v_\beta\otimes v_\gamma\otimes v_{-\beta-\gamma}\big)
% \end{align*}

% $\braket{\beta}{\alpha_i}+\braket{\gamma}{\alpha_i}+\braket{\beta}{\gamma}=-2$.
% $\braket{\beta}{\gamma}\ne -2$. 
% There are three cases:
% \begin{itemize}
% \item $\braket{\beta}{\gamma}=-1$, $\braket{\beta}{\alpha_i}=-1$, $\braket{\gamma}{\alpha_i}=0$
% \item $\braket{\beta}{\gamma}=-1$, $\braket{\beta}{\alpha_i}=0$, $\braket{\gamma}{\alpha_i}=-1$
% \item $\braket{\beta}{\gamma}=0$, $\braket{\beta}{\alpha_i}=-1$, $\braket{\gamma}{\alpha_i}=-1$
% \end{itemize}
% In all cases we have 2 terms surviving \rem{...}
\end{proof}

The trilinear form is more complicated to define:
\begin{lem}
Define $T\in V^*\otimes V^*\otimes V^*$ as follows.

Given a triple $(\beta,\gamma,-\beta-\gamma)\in\Phi^3_0$, write
\[
T(v_\beta\otimes v_\gamma\otimes v_{-\beta-\gamma})=\epsilon(\beta,\beta)\epsilon(\gamma,\gamma)\epsilon(\gamma,\beta) q^{-41+\ell(\beta,\gamma,-\beta-\gamma)}\qquad \beta\in\Phi_+,\ -\beta-\gamma\in\Phi_-
\]
and then use
\[
T(v_\beta\otimes v_\gamma\otimes v_\alpha)=T(v_\alpha\otimes v_\beta\otimes v_\gamma)q^{2\height(\alpha)}
\]
to reduce to the case above.

Also write
\[
\left.
\begin{aligned}
T(u_i\otimes v_{\alpha_i}\otimes v_{-\alpha_i})&=q
\\
T(u_i\otimes v_{-\alpha_i}\otimes v_{\alpha_i})&=-q
\\
T(v_{\alpha_i}\otimes u_i\otimes v_{-\alpha_i})&=-q^{-1}
\\
T(v_{\alpha_i}\otimes v_{-\alpha_i}\otimes u_i)&=q
\\
T(v_{-\alpha_i}\otimes u_i\otimes v_{\alpha_i})&=q^3
\\
T(v_{-\alpha_i}\otimes v_{\alpha_i}\otimes u_i)&=-q
\end{aligned}
\right\}
%\frac{[2d(i)]}{[d(i)]}
(q^{d(i)}+q^{-d(i)})
\]
and
\[
T(u_i\otimes u_j\otimes u_k)
=(q-q^{-1})
\begin{cases}
%\frac{[2][2 d(i)]}{[d(i)]}
[2]\left(q^{d(i)}+q^{-d(i)}\right)
&i=j=k
\\
-[d(k)]&i=j\leftarrow k\text{ or permutations}
\\
[d(k)]&i=j\rightarrow k\text{ or permutations}
%\\
%0&\text{else}
\end{cases}
\]
where $i,j,k$ are viewed as nodes of the Dynkin diagram,
$d(i)$ is the distance to the central node $5$,
and the Dynkin diagram is oriented away from the central node.
%note $q^{d(i)}+q^{-d(i)}=\frac{[2 d(i)]}{[d(i)]}$ if $d(i)\ne0$.

All other entries of $T$ are zero.

Then $T$ is an $\Uq(\e)$ invariant.
\end{lem}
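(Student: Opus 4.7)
The plan parallels the previous lemma. A trilinear form $T$ is invariant iff $T\circ\Delta^{(2)}(x)=\varepsilon(x)T$ for all generators $x$. The $k_i$-invariance is immediate, since every triple $(w_1,w_2,w_3)$ on which $T$ is nonzero has total weight zero---true of all four families listed in the statement---so the eigenvalue under $k_i\otimes k_i\otimes k_i$ is $1=\varepsilon(k_i)$. As in the previous proof, in finite dimension the $f_i$-invariance is a formal consequence of $k_i$- and $e_i$-invariance, so the whole content of the lemma is the identity
\[
T\bigl(\Delta^{(2)}(e_i)(w_1\otimes w_2\otimes w_3)\bigr)=0
\]
for every basis triple of weight $-\alpha_i$ and every $i\in\{1,\ldots,8\}$, where $\Delta^{(2)}(e_i)=e_i\otimes 1\otimes 1+k_i\otimes e_i\otimes 1+k_i\otimes k_i\otimes e_i$.

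Before the case-check, one verifies that the $\Phi^3_0$-piece of the definition is self-consistent. Iterating the rotation rule $T(v_\beta\otimes v_\gamma\otimes v_\alpha)=q^{2\height(\alpha)}T(v_\alpha\otimes v_\beta\otimes v_\gamma)$ three times produces the factor $q^{2(\height(\alpha)+\height(\beta)+\height(\gamma))}=1$ since $\alpha+\beta+\gamma=0$, so $T$ is well-defined; moreover, on any triple in $\Phi^3_0$ there is always at least one cyclic rotation with first entry positive and last entry negative, and if there are two such rotations they give matching values thanks to the standard identities for $\epsilon$. One then organises the $e_i$-check by the number of $u_j$'s among $w_1,w_2,w_3$. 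The all-$u$ case is vacuous by weight considerations. The mixed $u$-$v$ cases involve only a handful of input/output triples, taken from the tables for $T(u_i\otimes v_{\pm\alpha_i}\otimes v_{\mp\alpha_i})$ and permutations, together with $T(u_i\otimes u_j\otimes u_k)$; in each one the three summands of $\Delta^{(2)}(e_i)$ yield a short identity in $q$-numbers that is read off the tables directly.

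The substantive case is three $v$'s. Starting from $(\beta,\gamma,\delta)$ with $\beta+\gamma+\delta=-\alpha_i$, the three summands of $\Delta^{(2)}(e_i)$ produce (for their $v$-part) the three triples $(\beta+\alpha_i,\gamma,\delta)$, $(\beta,\gamma+\alpha_i,\delta)$, $(\beta,\gamma,\delta+\alpha_i)$ in $\Phi^3_0$ whenever these are defined, while the $u$-contributions appear precisely when one of $\beta+\alpha_i,\gamma+\alpha_i,\delta+\alpha_i$ equals zero. The three $\Phi^3_0$-outputs are linked by the moves defining $\to$: each of $(\beta,\gamma+\alpha_i,\delta)$ and $(\beta,\gamma,\delta+\alpha_i)$ is connected to $(\beta+\alpha_i,\gamma,\delta)$ by a single $i$-move. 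From this one reads off the differences of $\ell$ between the three triples, and a direct computation---using the relation $\epsilon(\beta,\gamma)\epsilon(\gamma,\beta)=(-1)^{\braket{\beta}{\gamma}}$ and multiplicativity of $\epsilon$ in each argument---shows that the three terms telescope to zero modulo the $q^{\braket{\alpha_i}{\cdot}}$ twists supplied by $\Delta^{(2)}$.

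The main obstacle is the bookkeeping in the three-$v$ case: the number of subcases is sizeable (depending on the signs of $\beta,\gamma,\delta$ controlling which rotation of the definition applies, and on which of $\beta+\alpha_i,\gamma+\alpha_i,\delta+\alpha_i$ lies in $\Phi$ or equals $0$, and with auxiliary care when two of these happen simultaneously). This is exactly the kind of verification the author flags in the introduction as being done with computer assistance; a case-by-case check over the $\Phi$-triples of weight $-\alpha_i$, performed symbolically, is the most efficient route. The identities on $\epsilon$ and on $\ell$ that one extracts along the way---in particular, that $\ell$ drops by $1$ along a single $i$-move between the three configurations above, and that $\epsilon(\alpha_i,\cdot)$ produces the signs predicted by the $e_i$-action in \S\ref{sec:248}---can nonetheless be formulated and proved abstractly, which is what makes the computer verification a routine rather than an opaque one.
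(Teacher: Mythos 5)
Your proposal follows essentially the same route the paper intends: the paper gives no argument at all here beyond stating that the proof ``is similar to that of the previous lemma, and is skipped,'' i.e.\ trivial $k_i$-invariance by weight, $e_i$-invariance checked case by case (organised by how many $u_j$'s appear), with $f_i$-invariance following formally, and the bulk of the three-$v$ case delegated to computer verification. Your additional observations---well-definedness of the cyclic rule via $\height(\alpha)+\height(\beta)+\height(\gamma)=0$, the fact that the three output triples are linked by single $i$-moves of the relation $\to$, and the (asserted, computer-checkable) compatibility of $\ell$ and $\epsilon$ with the $e_i$-action---are a faithful elaboration of that same strategy rather than a different one.
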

The proof is similar to that of the previous lemma, and is skipped.

\medskip

More generally, the 
decomposition of $V^{\otimes 2}$ is known: one has
\begin{equation}\label{eq:dec}
%V_{248}\otimes V_{248}\cong V_1 \oplus V_{248} \oplus V_{3875} \oplus V_{27000} \oplus V_{30380}
V_{\omega_1}\otimes V_{\omega_1}
\cong \bigoplus_{\omega\in \Omega} V_\omega\qquad
\Omega=\{0,\omega_1,\omega_7,2\omega_1,\omega_2\}
\end{equation}
where $V_\omega$ is the irreducible representation of $\Uq(\e)$ of highest weight $\omega$; 
it is also the unique irreducible representation of dimension $1,248,3875,27000,30380$ 
for $\omega=0,\omega_1,\omega_7,2\omega_1,\omega_2$, respectively.

This implies that $\text{End}(V^{\otimes 2})^{\Uq(\e)}$, 
the space of $\Uq(\e)$ intertwiners of endormorphisms of $V^{\otimes 2}$,
is of dimension $5$. Out of the invariants above we can build a basis for
this space. It is convenient to do so using a diagrammatic language.

\subsection{Diagrammatic depiction of invariants}\label{sec:diag}
We use here the graphical calculus which is standard in mathematical physics, in the particular form
which is adapted to quantum groups \cite{Kassel-QG}, namely our diagrams are {\em planar}.
See also Cvitanovic \cite{Cvita-book} for a discussion of the non $q$-deformed case (for all
simple Lie algebras), so without the planarity requirement.

In general, to a graph embedded in the plane of the form
\[
\tikz{\bb
\foreach\i in {1,...,3} \draw (\i+0.5,1) node[vertex] {} -- ++(0,-0.5);
\draw[decoration=brace,decorate] (1.5,1.2) -- node[above] {$m$} (3.5,1.2);
\foreach\i in {1,...,4} \draw (\i,-1) node[vertex] {} -- ++(0,0.5);
\draw[decoration=brace,decorate] (4,-1.2) -- node[below] {$n$} (1,-1.2);
\draw[pattern=dots] (0.5,0.5) rectangle (4.5,-0.5);
}
\]
%\rem{add gray dots for init/final?}
we will associate an $\Uq(\e)$ intertwiner from $V^{\otimes m}$ to $V^{\otimes n}$,
where tensor products are read from left to right.
As usual, vertical concatenation corresponds to multiplication, whereas horizontal
juxtaposition corresponds to tensor product.

The simplest graph is the identity from $V$ to $V$, which is 
\[
\tikz[baseline=-3pt]{\draw (0,1) node[vertex] {} -- (0,-1) node[vertex] {};}
\]

Next is the bilinear form $B$, which we draw as a cup:
\[
\mycup
\]
Because it is nondegenerate, it possesses an inverse, drawn as a cap:
\[
\mycap
\]
with the identities
\[
\tikz[baseline=-3pt]{
\draw (-0.5,1) node[vertex] {} -- (-0.5,0) [bend right=75] to (1,0) [bend left=75] to (2.5,0) -- (2.5,-1) node[vertex] {};
}
=
\tikz[baseline=-3pt,xscale=-1]{
\draw (-0.5,1) node[vertex] {} -- (-0.5,0) [bend right=75] to (1,0) [bend left=75] to (2.5,0) -- (2.5,-1) node[vertex] {};
}
=
\tikz[baseline=-3pt]{\draw (0,1) node[vertex] {} -- (0,-1) node[vertex] {};}
\]

Gluing together cup and cap in the opposite way, we obtain our first
nontrivial identity:
\begin{equation}\label{eq:rela}
\tikz[baseline=-3pt]{\draw (0,0) circle[radius=0.75];} = \frac{[20][24][31]}{[6][10]}
\end{equation}
(we shall not prove this identity, or any that follow, but mention in passing that this is simply
the principal specialization of the character of the adjoint representation of $\e$).

Similarly, the trilinear form $T$ can be drawn as
\[
\tri
\]
Combining it with caps, we can get three more intertwiners, namely
\[
\U\qquad\D\qquad\triinv
\]
In principle, there are several ways to glue caps to $T$ to produce these diagrams,
but by Schur's lemma and normalization condition these all produce the same result.
\rem[gray]{is there a better to say this?}

One also has the trivial identities:
\[
\tikz[baseline=-3pt]{
\draw (0,0) node[vertex] {} -- (0,0.5); \draw (0,1) circle[radius=0.5];
}
=
\tikz[baseline=-3pt]{
\draw (0,0) node[vertex] {} -- (0,-0.5); \draw (0,-1) circle[radius=0.5];
}
=0
\]
as well as two nontrivial ones:\footnote{Note that these identities {\em do}\/ depend
on the choice of normalization of $B$ and $T$, which is determined by imposing that it agree with
the standard $\e$ convention at $q=1$, that it 
be the simplest possible (entries are coprime polynomials)
and that constants occurring in the identities be palyndromic in $q$
(which fixes the remaining freedom of multiplying by a power of $q$).}
\begin{gather}\label{eq:relb}
\tikz[baseline=-3pt]{\draw (0,-1) node[vertex] {} -- (0,-0.5) (0,0.5) -- (0,1) node[vertex] {};
\draw (0,0) circle[radius=0.5];}
=
\frac{[10][15]^2[18][32]}{[5][9][16][30]}
\ \tikz[baseline=-3pt]{\draw (0,-1) node[vertex] {} -- (0,1) node[vertex] {};}
\\
\label{eq:relc}
\tikz[baseline=-3pt]{
\draw (-1.2,1) node[vertex] {} -- (-0.5,0.45) (0.5,0.45) -- (1.2,1) node[vertex] {};
\draw (0,-0.35) -- (0,-1) node[vertex] {};
\draw (0,-0.35) -- (-0.5,0.45) -- (0.5,0.45) -- cycle;
%
%\draw (90:0.75) -- ++(90:1) node[vertex] {} (210:0.75) -- ++(210:1) node[vertex] {} (330:0.75) -- ++(330:1) node[vertex] {};
%\draw (90:0.75) -- (210:0.75) -- (330:0.75) -- cycle;
}
=\frac{[6][10]^2[15]}{[2][5][30]}
\left(
\frac{[32]}{[3][16]}+\frac{[36]}{[9][12]}
\right)
\U
%\tikz[baseline=-3pt]{\draw (0,0) -- (90:1.75) node[vertex] {} (0,0) -- (210:1.75) node[vertex] {} (0,0) -- (330:1.75) node[vertex] {};}
\end{gather}
(the mirror image of the last identity follows from it).

At this stage we have all we need to produce a basis of intertwiners
of endomorphisms of $V^{\otimes 2}$; we propose
\[
\hspace{-6mm}%ugly
\full
\qquad
\cupcap
\qquad
\vertical
\qquad
\horizontal
\qquad
\mysquare
\]
It is a consequence of what follows that these intertwiners are indeed linearly
independent.

Now let us work out the multiplication table in this basis. Note that the last
basis element is simply the square of the fourth.
Thanks to \eqref{eq:rela}--\eqref{eq:relc}, we can compute all products except
the product of the last two (which itself allows to compute the square of the last).
This identity is not particularly nice, and is given in Appendix~\ref{app:mult} for
the sake of completeness. We present in the next section a more pleasant
alternative.

Before we proceed, let us point out that the trivial representation, being
the same as $V^{\otimes 0}$, is diagrammatically\dots invisible. However, in what follows
it is sometimes useful to emphasize that in a tensor product some trivial factors
are included, in which case we draw an empty vertex; e.g.,
\[
\tikz{
\draw (1,-1) node[vertex] {} -- (-1,1) node[vertex] {};
\node[empty] at (1,1) {};
\node[empty] at (-1,-1) {};
}
\]
is the natural map from $V\otimes \CC$ to $\CC\otimes V$.

\subsection{The spectral parameter-independent $R$-matrix}\label{sec:R0}
As mentioned before, our diagrams are required to be planar. Let us informally discuss this point.
In the undeformed case of $\e$, it is natural \cite{Cvita-book} to consider 
arbitrary diagrams (i.e., not embedded into the plane); however,  one can always 
turn them into planar diagrams by projecting them onto the plane, thus creating new ``virtual crossings''
where two formerly nonintersecting lines are now on top of each other. 
These crossings simply correspond to the natural
permutation of tensors $x\otimes y\mapsto y\otimes x$, which commutes with the $\e$ action. The point is of
course that such a permutation does not commute with the $\Uq(\e)$ action; instead one must use
the $R$-matrix (of $\Uq(\e)$).

With a bit of foresight, we therefore define
\begin{multline}\label{eq:defR0}
\crossing=f(q)\ 
\full
+
f(q^{-1})
\cupcap
\\
+g(q)
\vertical
+g(q^{-1})
\horizontal
+h(q)
\mysquare
\end{multline}
where
\begin{align*}
f(q)&=\frac{[6][10][15]^2}{[3][5][30]^2}\left({q^{-24}}-{q^{-22}}-{q^{-18}}+{q^{-16}}-{q^{-14}}+q^{-8}-1+q^6-q^{12}\right)
\\
g(q)&=\frac{[15]}{[3][30]}\left(
-q^{-16}-q^{-14}-q^{-12}-q^{-8}+q^{-6}+q^{-2}-1+q^2+q^6-q^8-q^{14}-q^{16}-q^{18}
%-q^{18}-q^{16}-{q^{-16}}-q^{14}-{q^{-14}}-{q^{-12}}-q^8-\frac{1}{q^8}+q^6+\frac{1}{q^6}+q^2+\frac{1}{q^2}-1
\right)
\\
h(q)&=h(q^{-1})=\frac{[5]}{[6][10]}
\end{align*}
Rather than deriving this ($\Uq(\e)$, i.e., spectral parameter-independent) $R$-matrix now, we postpone
its justification to \S\ref{sec:pties}, where the full ($\Uq(\ee)$, i.e., spectral parameter-dependent)
$R$-matrix will be obtained.
One can also define a similar diagram where overcrossing and undercrossing
are switched, by a simple 90 degree rotation; equivalently, it corresponds
to switching $q\leftrightarrow q^{-1}$ in the coefficients of the right hand side.

These two diagrams are inverses of each other so that one has the relations
\[
\tikz[rounded corners=7mm,baseline=-3pt]{
\draw (-1,-1) node[vertex] {} -- (1,0) -- (-1,1) node[vertex] {};
\draw (1,-1) node[vertex] {} -- (0.1,-0.55) (-0.1,-0.45) -- (-1,0) -- (-0.1,0.45) (0.1,0.55) -- (1,1) node[vertex] {};
}
=
\tikz[rounded corners=7mm,baseline=-3pt]{
\draw (1,-1) node[vertex] {} -- (-1,0) -- (1,1) node[vertex] {};
\draw (-1,-1) node[vertex] {} -- (-0.1,-0.55) (0.1,-0.45) -- (1,0) -- (0.1,0.45) (-0.1,0.55) -- (-1,1) node[vertex] {};
}
=
\full
\]
which one can recognize as Reidemeister move II. We also mention for the sake of completeness
the Reidemeister move III (or braid relation)
\[
\tikz[baseline=-3pt]{
\draw (1,-1) node[vertex] {} -- (-1,1) node[vertex] {};
\draw[bend left=60,draw=bgcolor,double=black,line width=2pt] (0,-1) node[vertex] {} to (0,1) node[vertex] {};
\draw[draw=bgcolor,double=black,line width=2pt] (-1,-1) node[vertex] {} -- (1,1) node[vertex] {};
}
=
\tikz[baseline=-3pt]{
\draw (1,-1) node[vertex] {} -- (-1,1) node[vertex] {};
\draw[bend right=60,draw=bgcolor,double=black,line width=2pt] (0,-1) node[vertex] {} to (0,1) node[vertex] {};
\draw[draw=bgcolor,double=black,line width=2pt] (-1,-1) node[vertex] {} -- (1,1) node[vertex] {};
}
\]

\rem[gray]{is there a useful knot theory application?}

We can now use the alternative basis of
$\text{End}(V^{\otimes 2})^{\Uq(\e)}$:
\[
\full
\qquad
\cupcap
\qquad
\vertical
\qquad
\crossing
\qquad
\invcrossing
\]

In order to complete the multiplication table, we need the following additional relations:
\begin{align*}
\tikz[baseline=-3pt,rounded corners=5mm]{
\draw (-1,0) node[vertex] {} -- (-0.1,-0.9) (0.1,-1.1) -- (1,-2) -- (0,-2.5) -- (-1,-2) -- (1,0) node[vertex] {};
}
&=
q^{60}\ \mycup
&
\tikz[baseline=-3pt,rounded corners=5mm]{
\draw (-1,0) node[vertex] {} -- (1,-2) -- (0,-2.5) -- (-1,-2) -- (-0.1,-1.1) (0.1,-0.9) -- (1,0) node[vertex] {};
}
&=
q^{-60}\ \mycup
\\
\tikz[baseline=-3pt,rounded corners=5mm]{
\draw (-1,0) node[vertex] {} -- (1,2) -- (0,2.5) -- (-1,2) -- (-0.1,1.1) (0.1,0.9) -- (1,0) node[vertex] {};
}
&=
q^{60}\ \mycap
&
\tikz[baseline=-3pt,rounded corners=5mm]{
\draw (-1,0) node[vertex] {} -- (-0.1,0.9) (0.1,1.1) -- (1,2) -- (0,2.5) -- (-1,2) -- (1,0) node[vertex] {};
}
&=
q^{-60}\ \mycap
\end{align*}
which is nothing but Reidemeister move I, as well as 
\begin{align*}
\tikz[baseline=-0.4cm,rounded corners=5mm,yscale=0.6]{
\draw (-1,1) node[vertex] {} -- (-0.1,0.1) (0.1,-0.1) -- (1,-1) {[sharp corners] -- (0,-1.5)} -- (-1,-1) -- (1,1) node[vertex] {};
\draw (0,-1.5) -- (0,-2.5) node[vertex] {};
}
&=
-q^{30}
\U
&
\tikz[baseline=-0.4cm,rounded corners=5mm,yscale=0.6]{
\draw (-1,1) node[vertex] {} -- (1,-1) {[sharp corners] -- (0,-1.5)} -- (-1,-1) -- (-0.1,-0.1) (0.1,0.1) -- (1,1) node[vertex] {};
\draw (0,-1.5) -- (0,-2.5) node[vertex] {};
}
&=
-q^{-30}
\U
\\[2mm]
%\tikz[baseline=-3pt,rounded corners=5mm]{
%\draw (30:1) -- (0,0) (150:1) -- (0,0) (270:1) -- (0,0);
%}
\tikz[baseline=0.4cm,rounded corners=5mm,yscale=0.6]{
\draw (-1,-1) node[vertex] {} -- (1,1) {[sharp corners] -- (0,1.5)} -- (-1,1) -- (-0.1,0.1) (0.1,-0.1) -- (1,-1) node[vertex] {};
\draw (0,1.5) -- (0,2.5) node[vertex] {};
}
&=
-q^{30}
%\tikz[baseline=-3pt,rounded corners=5mm]{
%\draw (-30:1) -- (0,0) (-150:1) -- (0,0) (-270:1) -- (0,0);
%}
\D
&
\tikz[baseline=0.4cm,rounded corners=5mm,yscale=0.6]{
\draw (-1,-1) node[vertex] {} -- (-0.1,-0.1) (0.1,0.1) -- (1,1) {[sharp corners] -- (0,1.5)} -- (-1,1) -- (1,-1) node[vertex] {};
\draw (0,1.5) -- (0,2.5) node[vertex] {};
}
&=
-q^{-30}
%\tikz[baseline=-3pt,rounded corners=5mm]{
%\draw (-30:1) -- (0,0) (-150:1) -- (0,0) (-270:1) -- (0,0);
%}
\D
\end{align*}
and
\begin{multline*}
\tikz[rounded corners=7mm,baseline=-3pt]{
\draw (-1,-1) node[vertex] {} -- (1,0) -- (0.1,0.45) (-0.1,0.55) -- (-1,1) node[vertex] {};
\draw (1,-1) node[vertex] {} -- (0.1,-0.55) (-0.1,-0.45) -- (-1,0) -- (1,1) node[vertex] {};
}
=(q^{-2}-q^{10}+q^{12})\full
+q^{31}(q-q^{-1})^2 \frac{[6][10][60]}{[20][30]} \cupcap
\\+
q^{18}(q-q^{-1})\frac{[5][30]}{[10][15]}\vertical
+(q^{-2}-q^{10}+q^{12})\invcrossing
-q^{10}\crossing
\end{multline*}
(and the same identity with undercrossings $\leftrightarrow$ overcrossings,
$q\leftrightarrow q^{-1}$).

\subsection{Primitive idempotents}
Because $V^{\otimes 2}$ is multiplicity-free, 
$\text{End}(V^{\otimes 2})^{\Uq(\e)}$ is a commutative algebra, and multiplication
operators can be diagonalized simultaneously, with eigenvectors the primitive
idempotents (projectors $P_\omega$ onto irreducible subrepresentations $V_\omega$). 
The eigenvalues give the coefficients of the expansion:
\begin{align*}
\full&=P_0+P_{\omega_1}+P_{\omega_7}+P_{2\omega_1}+P_{\omega_2}
\\
\cupcap&=\frac{[20][24][31]}{[6][10]}P_0
\\
\vertical&=\frac{[10][15]^2[18][32]}{[5][9][16][30]}P_{\omega_1}
\\
\crossing&=q^{60}P_0-q^{30}P_{\omega_1}+q^{12}P_{\omega_7}+q^{-2}P_{2\omega_1}-P_{\omega_2}
\end{align*}
and the expansion of the inverse crossing is obtained by inverting all coefficients, i.e.,
$q\leftrightarrow q^{-1}$ as usual.
\rem{are these the correct eigenvalues of the $R$-matrix?}

The multiplication rule for the diagrams is equivalent to the statement
\[
P_{\omega}P_{\omega'}=\delta_{\omega,\omega'} P_{\omega}
\qquad
\omega,\omega'\in\Omega
\]

We also mention the expansion:
\begin{multline*}
\horizontal=\frac{[10][15]^2[18][32]}{[5][9][16][30]}P_0
%+\frac{(1-q^4+q^8-q^{10}+q^{14}-q^{16}+q^{18}-q^{22}+q^{24}-q^{28}+q^{32})[6][10]^2[15]}{q^{16}[2][5][30]}
+\frac{[10]^2[15]}{[30]}\left((q-q^{-1})^2\frac{[7][12]}{[4]}+\frac{[6]^2}{[2]^2[3][5]}\right)
P_{\omega_1}\\
+\frac{[6][10][15][32]}{[5][16][30]}P_{\omega_7}-\frac{[10][15][18]}{[5][9][30]}P_{2\omega_1}+\frac{(q-q^{-1})^2[6][10][15]}{[30]}P_{\omega_2}
\end{multline*}
and the expansion of the ``square'' diagram is obtained by squaring all the coefficients above.

%(the coefficients of the diagram \square are simply the squares
%of the coefficients of \horizontal)

\section{\texorpdfstring{$\Uq(\ee)$}{Uq(e8(1))} and its $R$-matrix}\label{sec:aff}
\subsection{The $(248+1)$-dimensional representation}
In this section we present explicit formulae for the one-parameter
family of representations $W_z$ of dimension $249$ of $\Uq(\ee)$. Note that the parameter $z\in\CC^\times$,
called ``spectral parameter'' comes automatically from the $\ZZ$-grading of
the algebra.

The action of $\Uq(\e)\subset \Uq(\ee)$ on $W_z$ is as follows: 
it is the direct sum of the $248$-dimensional
representation $V$ defined in \S\ref{sec:248} and of the trivial representation. We use the same
basis $\{v_\beta,\ \beta\in\Phi\}\cup\{u_i,\ i=1,\ldots,8\}$ for the former and fix some
nonzero vector $w$ in the latter.

The action of the remaining generators of $\Uq(\ee)$ on $W_1$ is given by
\begin{align*}
k_0 v_\beta &= q^{\braket{\alpha_0}{\beta}} v_\beta
&
k_0 u_j &= u_j
&
k_0 w &= w
\\
e_0 v_\beta &= \begin{cases}
\epsilon(\alpha_0,\beta) v_{\beta+\alpha_0}&\beta+\alpha_0\in\Phi
\\
-c_0\kappa^{-1}w-\sum_{i=1}^8 c_i u_i&\beta+\alpha_i=0
\\
0&\text{else}
\end{cases}
&
e_0 u_j &= -[C_{0j}] v_{\alpha_0} 
&
e_0 w &=\kappa\, v_{\alpha_0}
\\
f_0 v_\beta &= \begin{cases}
\epsilon(-\alpha_0,\beta) v_{\beta-\alpha_0}&\beta-\alpha_0\in\Phi
\\
c_0\kappa^{-1}w+\sum_{i=1}^8 c_i u_i&\beta-\alpha_i=0
\\
0&\text{else}
\end{cases}
&
f_0 u_j &= [C_{0j}] v_{-\alpha_0} 
&
f_0 w &=-\kappa\,v_{-\alpha_0}
\end{align*}
where the $c_i$ are constants to be determined, and $\kappa$ is a parameter related to the freedom
in the relative normalization of $w$ and of the basis of $V$, which we do not fix for now.

Imposing the relations of the algebra is
equivalent to the following system of equations:
\begin{align*}
\sum_{j=1}^8 [C_{ij}] c_j  &= \delta_{i,1} \qquad i=1,\ldots,8
\\
c_0+c_1&=[2]
\end{align*}
which can be readily solved:
\begin{align*}
c_i&=\frac{1}{\det [C]} \widetilde{[C]}_{i1}
=\frac{[6][10][15]}{[2][3][5][30]}
\left(
\frac{[2][3][18]}{[6][9]},
\frac{[2][3][12]}{[4][6]},
\frac{[2][8]}{[4]},
[5],
[2][3],
[2]^2,
[2],
[3]
\right)
\qquad
i=1,\ldots,8
\\
c_0&=(q-q^{-1})^2\frac{[6][10][15]}{[30]}
\end{align*}

We then restore the $z$-dependence to obtain the general action on $W_z$ by
transforming generators according to their degree: $e_0\mapsto z\,e_0$, $f_0\mapsto z^{-1}f_0$,
all others unchanged.

\subsection{The spectral parameter-dependent $R$-matrix}
Because $\Uq(\ee)$ is pseudotriangular, we know
that for generic $z,z'\in \CC^\times$, $W_z\otimes W_{z'}$ and $W_{z'}\otimes W_z$ are isomorphic.
Up to normalization, the (unique) intertwiner is given by the universal $R$-matrix in these representations;
since the universal $R$-matrix is of degree $0$, this intertwiner only depends on $z'/z$.

Here we follow Jimbo's approach \cite{Jimbo-qg-YBE} which is to directly compute the intertwiner from the
intertwining relations. The normalization that we obtain may be different from the one coming
from the universal $R$-matrix.

We thus look for an operator $\cR(z'/z): W_z\otimes W_{z'}\to W_{z'}\otimes W_z$ that satisfies
\begin{equation}\label{eq:inter}
\cR(z'/z)(\rho_z\otimes \rho_{z'})(\Delta(x))=(\rho_{z'}\otimes \rho_z)(\Delta(x))\cR(z'/z)
\end{equation}
for all $x\in \Uq(\ee)$, where in this equation only, we write explicitly the representation map
$\rho_z: \Uq(\ee)\to \text{End}(W_z)$.

First we consider the commutation with $x\in \Uq(\e)$ only. We know that $W_z\cong V_{\omega_1}\oplus V_0$
as a $\Uq(\e)$ representation; using \eqref{eq:dec}, we obtain\footnote{Note in particular
that the ``multiplicity-free'' property that is crucial in e.g.\ \cite{DGZ-YBE} fails here.}
\[
W_z\otimes W_{z'}\cong \CC^2\otimes V_0 \oplus \CC^3\otimes V_{\omega_1}\oplus V_{\omega_7}\oplus V_{2\omega_1}
\oplus V_{\omega_2}
\]
This means that the space of $\Uq(\e)$ intertwiners from $W_z\otimes W_{z'}$ to $W_{z'}\otimes W_z$
is of dimension $2^2+3^2+1^2+1^2+1^2=16$.
Using the diagrammatic language of \S\ref{sec:diag}, it is easy to produce
a basis of this space, namely
\[
\begin{array}{cccc}
\full
&
\cupcap
&
\vertical
&
\horizontal
\\[13mm]
\mysquare
&
\myempty
&
\cupa
&
\capa
\\[13mm]
\leftid
&
\diagida
&
\diagidb
&
\rightid
\\[13mm]
\Yaa
&
\Yba
&
\Yab
&
\Ybb
\end{array}
\]
Recall that filled (resp.\ empty) vertices correspond to $V_{\omega_1}=V$ (resp.\ $V_0=\CC$), so that
when one concatenates diagrams, if the type of the vertices does not match, the resulting product is
zero.

$\cR(z)$ must therefore be a linear combination of these 16 intertwiners. Imposing \eqref{eq:inter}
for the remaining generators $e_0,f_0,k_0$ of $\Uq(\ee)$ results in a very large number of linear equations
on the entries of $\cR(z'/z)$. For practical purposes, since we know the system has a nontrivial solution,
it is sufficient to extract a subset of $15$ independent equations among these and then solve them. 
Such a subset can be found for any $z,z'$ (as opposed to, for generic $z,z'$), implying uniqueness of the
intertwiner. Finding and solving these equations 
is best performed by computer, and we present the explicit solution in Appendix~\ref{app:main},
where the chosen normalization is
\begin{equation}\label{eq:norm}
\cR(z) v_{\omega_1}\otimes v_{\omega_1}=v_{\omega_1}\otimes v_{\omega_1}
\end{equation}
(i.e., the highest-weight-to-highest-weight entry is $1$).

We now discuss properties of $\cR(z)$. %It should be noted that 
These properties should be checked using the diagrammatic algebra itself (which is only of dimension $16$,
so $16\times16$ matrices in the regular representation), rather than using the actual
operators acting on copies of $W_z$. It is sometimes useful to represent graphically the $R$-matrix as
\[
\check R(z'/z)=
\tikz[baseline=-3pt]{
\draw[dashed] (-1,-1) node[halfvertex] {}  -- (1,1) node[halfvertex] {} node[right] {$z'$};
\draw[dashed] (1,-1) node[halfvertex] {} -- node[Rmat] {} (-1,1) node[halfvertex] {} node[left] {$z$};
}
\]
where half-filled vertices correspond to the whole of $W_z\cong V_{\omega_1}\oplus V_0$,
and correspondingly, dashed lines to identity operators on $W_z$,
namely
$\tikz[baseline=-3pt]{\draw[dashed] (0,1) node[halfvertex] {} -- (0,-1) node[halfvertex] {};}
=\tikz[baseline=-3pt]{\draw (0,1) node[vertex] {} -- (0,-1) node[vertex] {};}
+\tikz[baseline=-3pt]{\path (0,1) node[empty] {} -- (0,-1) node[empty] {};}$.

\subsection{Basic properties of the $R$-matrix}\label{sec:pties}
First, from Schur's lemma one knows that $\cR(z) \cR(z^{-1})$ must be proportional to the identity.
With the chosen normalization \eqref{eq:norm}, one in fact has the so-called unitarity equation
\begin{align}\label{eq:unit}
%\cR(z)\cR(z^{-1})
\tikz[rounded corners=7mm,baseline=-3pt]{
\draw[dashed] (-1,-1) node[halfvertex] {} -- node[Rmat] {} (1,0) -- node[Rmat] {} (-1,1) node[halfvertex] {};
\draw[dashed] (1,-1) node[halfvertex] {} -- (-1,0) -- (1,1) node[halfvertex] {};
}
&=
\full+\leftid+\rightid+\myempty
\\\notag&=\identity
\end{align}
As a check of the methods of this paper, the reader is invited
to verify this equation by direct computation.

Similarly, one can check the Yang--Baxter equation
\begin{align}\label{eq:YBE}
(\cR(z)\otimes 1)
(1\otimes\cR(wz))
(\cR(w)\otimes 1)
&=
(1\otimes\cR(w))
(\cR(wz)\otimes 1)
(1\otimes\cR(z))
\\\notag
\tikz[baseline=-3pt,dashed]{
\draw (1,-1) node[halfvertex] {} -- node[Rmat] {} (-1,1) node[halfvertex] {};
\draw[bend left=60] (0,-1) node[halfvertex] {} to node[Rmat,pos=0.3] {} node[Rmat,pos=0.7] {} (0,1) node[halfvertex] {};
\draw (-1,-1) node[halfvertex] {} -- (1,1) node[halfvertex] {};
}
&=
\tikz[baseline=-3pt,dashed]{
\draw (1,-1) node[halfvertex] {} -- node[Rmat] {} (-1,1) node[halfvertex] {};
\draw[bend right=60] (0,-1) node[halfvertex] {} to node[Rmat,pos=0.3] {} node[Rmat,pos=0.7] {} (0,1) node[halfvertex] {};
\draw (-1,-1) node[halfvertex] {} -- (1,1) node[halfvertex] {};
}
\end{align}
%where $1$ is the identity acting on a single copy of $\CC^{249}$.

Next we come to the so-called crossing symmetry. Consider the 90 degree counterclockwise rotation of
the $R$-matrix. Then we expect it to be related to the original $R$-matrix up to shift of the
spectral parameter. The 90 degree rotation simply permutes our basis of $\Uq(\e)$ intertwiners,
so it easy to compute the result. More specifically, because the dual representation of $W_z$
satisfies $W_z^*\cong W_{q^{30}z}$, we expect the two diagrams below to be proportional:
\[
\tikz[baseline=-3pt,rounded corners=3mm]{
\draw[dashed] (-1.5,-1) node[halfvertex] {}  -- (-1,0.5) -- (-0.5,0.5) -- node[pos=0.1] {$q^{30}z'$} (0.5,-0.5) -- (1,-0.5) -- (1.5,1) node[halfvertex] {} node[right] {$z'$};
\draw[dashed] (0,-1) node[halfvertex] {} -- (-0.5,-0.5) -- node[Rmat] {} (0.5,0.5) -- (0,1) node[halfvertex] {} node[left] {$z$};
}
\propto
\tikz[baseline=-3pt]{
\draw[dashed] (-1,-1) node[halfvertex] {}  -- (1,1) node[halfvertex] {} node[right] {$z'$};
\draw[dashed] (1,-1) node[halfvertex] {} -- node[Rmat] {} (-1,1) node[halfvertex] {} node[left] {$z$};
}
\]
For this to happen, it is clear from the expression of the $R$-matrix
in appendix~\ref{app:main} that one must define dashed cups and caps as follows:
\begin{align}\label{eq:dashcap}
\tikz[baseline=-3pt]{\draw[bend left=60,dashed] (-1,0) node[vertex] {} to (1,0) node[vertex] {};}
&=
\mycap+\frac{(q-q^{-1})^2[6][10][15]}{[30]\kappa^2}
\ \tikz[baseline=-3pt]{
\node[empty] at (-1,0) {};
\node[empty] at (1,0) {};
}
\\\label{eq:dashcup}
\tikz[baseline=-3pt]{\draw[bend left=60,dashed] (1,0) node[vertex] {} to (-1,0) node[vertex] {};}
&=
\mycup+\frac{[30]\kappa^2}{(q-q^{-1})^2[6][10][15]}
\ \tikz[baseline=-3pt]{
\node[empty] at (-1,0) {};
\node[empty] at (1,0) {};
}
\end{align}
Of course, it is very natural at this stage to set the free parameter $\kappa$ to the value
\begin{equation}\label{eq:kappa}
\kappa=(q-q^{-1})\left(\frac{[6][10][15]}{[30]}\right)^{1/2}
\end{equation}
\rem[gray]{i.e., $\kappa=\sqrt{c_0}$, which kinda makes sense}
and we do so in the remainder of this paper. The exact crossing relation is
\begin{equation}\label{eq:crossing}
\tikz[baseline=-3pt,rounded corners=3mm]{
\draw[dashed] (-1.5,-1) node[halfvertex] {}  -- (-1,0.5) -- (-0.5,0.5) -- node[pos=0.1] {$q^{30}z'$} (0.5,-0.5) -- (1,-0.5) -- (1.5,1) node[halfvertex] {} node[right] {$z'$};
\draw[dashed] (0,-1) node[halfvertex] {} -- (-0.5,-0.5) -- node[Rmat] {} (0.5,0.5) -- (0,1) node[halfvertex] {} node[left] {$z$};
}
=q^4\frac{(1-z)(1-q^{10}z)(1-q^{18}z)(1-q^{28}z)}{(1-q^2z)(1-q^{12}z)(1-q^{20}z)(1-q^{30}z)}
\tikz[baseline=-3pt]{
\draw[dashed] (-1,-1) node[halfvertex] {}  -- (1,1) node[halfvertex] {} node[right] {$z'$};
\draw[dashed] (1,-1) node[halfvertex] {} -- node[Rmat] {} (-1,1) node[halfvertex] {} node[left] {$z$};
}
\end{equation}
Note that the prefactor can be absorbed in the redefinition 
\[
\check R_{\mathrm{poly}}(z)=z^{-2}(1-q^2z)(1-q^{12}z)(1-q^{20}z)(1-q^{30}z)\check R(z)
\]
(which turns entries into Laurent polynomials of $z$), 
but then it is the unitarity equation \eqref{eq:unit} which acquires a prefactor.\footnote{%
Requiring both unitarity and crossing relations without a prefactor would take us beyond the realm
of rational functions, and we do not pursue this here.}
%\rem{explain how to get both: infinite product}

There are three special values of the spectral parameter. First, at $z=1$, 
%$\cR(1)$ must be proportional to the identity by Schur's lemma, and in fact
one has
$
\cR(1)=1
$.
The other two values are $0$ and $\infty$:
\begin{align*}
\cR(0)&=q^2\left(\crossing+\diagida+\diagidb+\myempty\right)
\\
\cR(\infty)&=q^{-2}\left(\invcrossing+\diagida+\diagidb+\myempty\right)
\end{align*}
\rem{why these annoying factors of $q^{\pm2}$?}
where the crossing is defined in \eqref{eq:defR0}.
This gives an {\em a posteriori}\/ justification for the introduction of the crossing in
\S\ref{sec:R0}.

At this stage it is natural to switch to the alternative basis which involves these crossings,
and the expression of the $R$-matrix simplifies. We find, using \eqref{eq:kappa} to further simplify,
%\vfill\eject
\begin{align*}
\cR(z)&=
q^3(q-q^{-1})^2[6]\frac{z}{(1-q^2z)(1-q^{12}z)}
\full[0.85]
\\
&\ 
+q^3\frac{\kappa^2}{[5]}\frac{z(1-z)(1-q^{10}-q^{20}z+q^{30}+q^{40}z-q^{50}z)}{(1-q^2z)(1-q^{12}z)(1-q^{20}z)(1-q^{30}z)}
\cupcap[0.85]
\\
&\ 
-q^{13}(q-q^{-1})^2[5]\frac{z(1-z)}{(1-q^2z)(1-q^{12}z)(1-q^{20}z)}
\vertical[0.85]
\\
&\ 
+q^7\frac{1-z}{(1-q^2z)(1-q^{12}z)}
\left(
q^{-5}\crossing[0.85]
-q^{5}z
\invcrossing[0.85]
\right)
\\
&\ 
+q^2(q-q^{-1})\kappa^2 \frac{z(1+q^{30}z)}{(1-q^2z)(1-q^{12}z)(1-q^{20}z)}
\left(
\leftid[0.85]+\rightid[0.85]
\right)
\\
&\ 
+\left(\cf{\leftid[0.28]}+\frac{q^2-z}{1-q^2z}\right)\left(\diagida[0.85]+\diagidb[0.85]\right)
\\
&\hspace{-2cm}
+q^{17}(q-q^{-1})\kappa\frac{z(1-z)}{(1-q^2z)(1-q^{12}z)(1-q^{20}z)}
\left(
\Yaa[0.85]+\Yba[0.85]
%\right.
%\\&\hspace{7cm}\left.
%+(q-q^{-1})^2\frac{[6][10][15]}{[30]} 
%\left(
+\Yab[0.85]+\Ybb[0.85]
\right)
\\
&\hspace{-2cm}
+q^{32}(q-q^{-1})\kappa^2\frac{z(1-z^2)}{(1-q^2z)(1-q^{12}z)(1-q^{20}z)(1-q^{30}z)}
\left(\capa[0.85]
%+(q-q^{-1})^4\left(\frac{[6][10][15]}{[30]}\right)^2
+\cupa[0.85]
\right)
\\
&\hspace{-2cm}
+\left(\cf{\full[0.28]}+\cf{\cupcap[0.28]}
+\cf{\crossing[0.28]}+\cf{\invcrossing[0.28]}
%-q^{17}(q-q^{-1})^2\frac{[6][15]}{[5]}\frac{z(1-z)}{(1-q^2z)(1-q^{12}z)(1-q^{20}z)}
-(q-q^{-1})\frac{[6][15]}{[5]\kappa}\cf{\Yaa[0.28]}
\right)
\myempty[0.85]
\end{align*}
where we have reused certain coefficients by denoting them $\cf{\cdot}$.

\subsection{Fusion channels}
In the three multiplicity-free channels, one can directly rewrite the $R$-matrix in terms
of the idempotents:
\begin{multline}
\cR(z)=P_{2\omega_1}
+\frac{z-q^2}{1-q^2z}P_{\omega_2}
+\frac{(z-q^2)(z-q^{12})}{(1-q^2 z) (1-q^{12} z)}P_{\omega_7}
\\
+\frac{1}{(1-q^2 z) (1-q^{12} z)(1-q^{20})} \cR_{\omega_1}(z)
+\frac{1}{(1-q^2 z) (1-q^{12} z)(1-q^{20})(1-q^{30})}\cR_0(z)
\end{multline}
where the last two terms are the restrictions of $\cR(z)$
to $\CC^3\otimes V_{\omega_1}$ and $\CC^2\otimes V_0$, respectively, in which we have taken out the LCM of
denominators (and the GCD of numerators is $1$). Diagonalizing $\check R(z)$ in these remaining
blocks results in an eigenvalue 
$\frac{z-q^2}{1-q^2z}$ in the $V_{\omega_1}$ block, and then pairs of eigenvalues that are
solutions of quadratic equations in each block.
\rem{why the extra degeneracy between $V_{\omega_2}$ and $V_{\omega_1}$? some non QG invariant extra symmetry?}

As a function of the spectral parameter $z$, $\cR(z)$ is a rational function with
poles at $z=q^{-2},q^{-12},q^{-20},q^{-30}$. Conversely, due to unitarity equation~\eqref{eq:unit},
at $z=q^2,q^{12},q^{20},q^{30}$, $\cR(z)$ is noninvertible. More precisely, define
\[
\begin{aligned}
f_a&=\mathrm{Res}_{z=q^{-a}} \cR(z)
\\
g_a&=\cR(z=q^a)
\end{aligned}
\qquad
a\in \{2,12,20,30\}
\]

Then from \eqref{eq:unit} we have $f_ag_a=g_af_a=0$, and in fact, expanding this equation around
$z=q^a$, diagonalizing $\cR(z)$ and noting that each zero eigenvalue of $\cR(z)$ at $z=q^a$ corresponds to a 
pole eigenvalue of $\cR(z)$ at $z=q^{-a}$, we conclude that their ranks are complementary,
so that $\textrm{Im}(f_a)=\textrm{Ker}(g_a)=:X_a$ and $\textrm{Im}(g_a)=\textrm{Ker}(f_a)=:Y_a$.
$X_a$ is a submodule of $W_{z}\otimes W_{q^a z}$ with quotient isomorphic to $Y_a$, and vice versa
for $W_{q^a z}\otimes W_{z}$. Because $\cR(z'/z)$ is the {\em only}\/ intertwiner from 
$W_z\otimes W_{z'}$ to $W_{z'}\otimes W_z$ even at $z'/z=q^{\pm a}$, %$X$, $Y$ and
$W_z\otimes W_{q^{\pm a}z}$ is indecomposable. \rem[gray]{can we say more? irreducibility of $X$?}

If $a=30$, $X_{30}$ is of dimension $1$: we recognize it as the trivial representation of $\Uq(\ee)$.
As an exercise, let us recover this by looking at possible embeddings of the trivial representation into
$W_z\otimes W_{z'}$. From the $\Uq(\e)$ invariance, we know that this embedding must take the form
\begin{equation}\label{eq:triv1}
\alpha\ \mycap + \beta\ 
\tikz[baseline=-3pt]{
\node[empty] at (-1,0) {};
\node[empty] at (1,0) {};
}
\end{equation}

Imposing the invariance under $e_0$ results in the relations
\[
z'=q^{30}z\qquad %\frac{\beta}{\alpha}=(q-q^{-1})^2\frac{[6][10][15]}{[30]}
\alpha=\beta
\]
which is consistent with the above. Of course, we recognize that \eqref{eq:triv1} is (up to normalization)
the dashed cap \eqref{eq:dashcap},
and its invariance is once again nothing but the statement that $W_z^*\cong W_{q^{30}z}$.

Dually, the invariance of the  $W_z\otimes W_{z'}\to \CC$ diagram
\begin{equation}\label{eq:triv2}
\alpha\ \mycup + \beta\ \tikz[baseline=-3pt]{
\node[empty] at (-1,0) {};
\node[empty] at (1,0) {};
}
\end{equation}
leads to
\[
z'=q^{-30}z\qquad %\frac{\alpha}{\beta}=(q-q^{-1})^2\frac{[6][10][15]}{[30]}
\alpha=\beta
\]
(compare with the dashed cup \eqref{eq:dashcup}).
Up to normalization, $f_a$ is just the product of \eqref{eq:triv1} and \eqref{eq:triv2}.

If $a=20$, $X_{20}$ is of dimension $249$, and so it must be isomorphic to
$W_{z''}$ for some $z''$. Again, we look for an embedding of $W_{z''}$ into $W_{z}\otimes W_{z'}$; 
by $\Uq(\e)$ invariance it is of the form
\begin{equation}\label{eq:fus1}
\alpha\ \D+\beta\ \tikz[baseline=-3pt]{
\draw[bend left=60] (-1,-1) node[vertex] {} to (1,-1) node[vertex] {};
\node[empty] at (0,1) {};
}
+\gamma\ 
\tikz[baseline=-3pt]{
\draw[bend right=40] (-1,-1) node[vertex] {} to (0,1) node[vertex] {};
\node[empty] at (1,-1) {};
}
+\delta\ 
\tikz[baseline=-3pt]{
\draw[bend left=40] (1,-1) node[vertex] {} to (0,1) node[vertex] {};
\node[empty] at (-1,-1) {};
}
+\epsilon\ 
\tikz[baseline=-3pt]{
\node[empty] at (1,-1) {};
\node[empty] at (-1,-1) {};
\node[empty] at (0,1) {};
}
\end{equation}
and requiring that it commute with the $e_0$ action leads to
\[
z'=q^{20}z\qquad z''=q^{10}z\qquad
\frac{\alpha}{\epsilon}=
%-(q-q^{-1})^{-1}\frac{[5][30]}{[6][15][20]}
-\frac{[5]}{[20]}\sqrt{\frac{[10][30]}{[6][15]}}
\qquad
\frac{\beta}{\epsilon}=
%(q-q^{-1})^{-2}\frac{[30]}{[6][15][20]}
%\qquad
\frac{\gamma}{\epsilon}=\frac{\delta}{\epsilon}=\frac{[10]}{[20]}
\]
so that $W_z\otimes W_{q^{20}z}\supset X_{20}\cong W_{q^{10}z}$.

Dually, one finds
\begin{equation}\label{eq:fus2}
\alpha\ \U+\beta\ \tikz[baseline=0cm,yscale=-1]{
\draw[bend left=60] (-1,-1) node[vertex] {} to (1,-1) node[vertex] {};
\node[empty] at (0,1) {};
}
+\gamma\ 
\tikz[baseline=0cm,yscale=-1]{
\draw[bend right=40] (-1,-1) node[vertex] {} to (0,1) node[vertex] {};
\node[empty] at (1,-1) {};
}
+\delta\ 
\tikz[baseline=0cm,yscale=-1]{
\draw[bend left=40] (1,-1) node[vertex] {} to (0,1) node[vertex] {};
\node[empty] at (-1,-1) {};
}
+\epsilon\ 
\tikz[baseline=0cm,yscale=-1]{
\node[empty] at (1,-1) {};
\node[empty] at (-1,-1) {};
\node[empty] at (0,1) {};
}
\end{equation}
and requiring that it commute with the $e_0$ action leads to
\[
z'=q^{-20}z\qquad z''=q^{-10}z
%\qquad
%\frac{\alpha}{\epsilon}=-(q-q^{-1})\frac{[5][10]}{[20]}
%\qquad
%\frac{\beta}{\epsilon}=(q-q^{-1})^{2}\frac{[6][10]^2[15]}{[20][30]}
%\qquad
%\frac{\gamma}{\epsilon}=\frac{\delta}{\epsilon}=\frac{[10]}{[20]}
\]
and the same ratios of parameters.
Again, $f_a$ is proportional to the product of \eqref{eq:fus1} and \eqref{eq:fus2}.

For $a=2,12$, the spaces $X_a$ are higher-dimensional, and are not so simple to describe in our diagrammatic
language. In terms of $\Uq(\e)$, one has:
\begin{align*}
X_{12}&=V_{\omega_7}\oplus V_{\omega_1}\oplus V_0
\\
X_2&=V_{\omega_2}\oplus V_{\omega_7}\oplus \CC^2\otimes V_{\omega_1}\oplus V_0
\end{align*}
\rem[gray]{in the first case, it has to be irreducible, otherwise sub/quotient would be one of the previous
two cases}
By obvious complementation, one can get similar decompositions for the spaces $Y_a$, but again, their description
is more complicated. For instance, we have
\[
Y_2\cong V_{2\omega_1}\oplus V_{\omega_1}\oplus V_0
\]
and we can describe its embedding into $W_{q^2z}\otimes W_z$ as
\begin{multline*}
\alpha\dbl[1]
+ \beta
\left(
\D[1]
+
\frac{[32]}{[16]}\sqrt{\frac{[6][10][15]}{[30]}}
\left(
\tikz[baseline=-3pt,scale=1]{
\draw[bend right=40] (-1,-1) node[vertex] {} to (0,1) node[vertex] {};
\node[empty] at (1,-1) {};
}
+
\tikz[baseline=-3pt,scale=1]{
\draw[bend left=40] (1,-1) node[vertex] {} to (0,1) node[vertex] {};
\node[empty] at (-1,-1) {};
}
\right)
\right)
\\
+\gamma 
\left(
\tikz[baseline=-3pt,scale=1]{
\draw[bend left=60] (-1,-1) node[vertex] {} to (1,-1) node[vertex] {};
\node[empty] at (0,1) {};
}
+ [31]\ 
\tikz[baseline=-3pt,scale=1]{
\node[empty] at (1,-1) {};
\node[empty] at (-1,-1) {};
\node[empty] at (0,1) {};
}
\right)
\end{multline*}
where $\dbl[0.5]$ is the embedding of $V_{2\omega_1}$ into $V_{\omega_1}^{\otimes 2}$ (with some normalization).
However, the coefficients $\alpha,\beta,\gamma$ can only be fixed by independently specifying the
$\Uq(\ee)$ action on $V_{2\omega_1}\oplus V_{\omega_1}\oplus V_0$.

\appendix
\section{Some product rules}\label{app:mult}
\begin{multline*}
\tikz[baseline=-3pt]{
\draw (1,-1) node[vertex] {} -- (0.4,-0.4) [rounded corners=4mm] -- (0.75,0.15)-- (-1,1) node[vertex] {};
\draw[draw=bgcolor,double=black,line width=2pt] (-1,-1) node[vertex] {} -- (-0.4,-0.4) [rounded corners=4mm] -- (-0.75,0.15) -- (1,1) node[vertex] {};
\draw (-0.4,-0.4) -- (0.4,-0.4);
}
=
\tikz[baseline=-3pt,scale=-1]{
\draw (1,-1) node[vertex] {} -- (0.4,-0.4) [rounded corners=4mm] -- (0.75,0.15)-- (-1,1) node[vertex] {};
\draw[draw=bgcolor,double=black,line width=2pt] (-1,-1) node[vertex] {} -- (-0.4,-0.4) [rounded corners=4mm] -- (-0.75,0.15) -- (1,1) node[vertex] {};
\draw (-0.4,-0.4) -- (0.4,-0.4);
}
=
q^{10}\horizontal
-q^{20}\vertical
\\
+\frac{(q-q^{-1})[6][10][15]}{[5][30]}
\left(
q^{-5}\full
+q^{35}\cupcap
+q^{15}\crossing
\right)
\end{multline*}

{\tiny

\def\qn(#1){[#1]}
\begin{align*}
&
\doublesquare=
-\frac{(q^2-1)^2 \qn(6)^2 \qn(10)^3 \qn(15)^3 \qn(18) \qn(32)}{q^2 \qn(5)^2 \qn(9) \qn(16)
   \qn(30)^3}
\full\\
&+\frac{\left(q^{32}-q^{30}+q^{28}-q^{26}+q^{22}-q^{20}+q^{18}-q^{16}+q^{14}-q^{12}+q^{10}-q^6+q^4-q^2+1\right) \qn(2) \qn(6) \qn(10)^3 \qn(12) \qn(15)^3 \qn(18)
   \qn(32)}{q^{16} \qn(4) \qn(5)^2 \qn(9) \qn(16) \qn(30)^3}
\cupcap\\
&+\scriptstyle\frac{\left(q^{20}+q^{18}+q^{16}+q^{12}+q^8+q^4+q^2+1\right)
   \left(q^{24}-q^{18}+q^{16}-q^{14}+q^{12}-q^{10}+q^8-q^6+1\right) \left(q^{36}+q^{30}-q^{26}+q^{24}-q^{20}+q^{18}-q^{16}+q^{12}-q^{10}+q^6+1\right) \qn(6)^2 \qn(10)^2 \qn(15)^2}{q^{40}
   \qn(3) \qn(5)^2 \qn(30)^2}\\
&\hspace{14cm}\vertical\\
&-\frac{\left(q^{48}-q^{46}-q^{42}+q^{40}-q^{38}-q^{36}+q^{32}+q^{30}-q^{28}+q^{26}-2 q^{24}+q^{22}-q^{20}+q^{18}+q^{16}-q^{12}-q^{10}+q^8-q^6-q^2+1\right)
   \qn(6)^2 \qn(10)^2 \qn(15)^2}{q^{24} \qn(3) \qn(5)^2 \qn(30)^2}\\
&\hspace{14cm}\horizontal\\
&+\frac{\left(q^{20}+q^{18}-q^{14}-q^{12}+q^{10}-q^8-q^6+q^2+1\right) \qn(6) \qn(10)^2
   \qn(15)}{q^{10} \qn(2) \qn(3) \qn(5) \qn(30)}
\mysquare
\end{align*}
}
\vfill\eject%
\begin{landscape}%
\section{The main formula}\label{app:main}%
{\def\qn(#1){[#1]}%
\tiny\begin{gather*}
\cR(z)=
 \frac{\qn(6) \qn(10) \qn(15)^2 (q^{58} z^2-q^{56} z^2-q^{52} z^2+q^{50} z^2-q^{48}
   z^2+q^{46} z+q^{42} z^2-q^{40} z+q^{38} z-q^{36}-q^{34} z^2+q^{34} z-q^{32} z
}{}
\\\frac{+q^{30}+q^{28} z^2-q^{26} z+q^{24}
   z-q^{24}-q^{22} z^2+q^{20} z-q^{18} z+q^{16}+q^{12} z-q^{10}+q^8-q^6-q^2+1)}{q^{22} \qn(3)
   \qn(5) \qn(30)^2 (q^2 z-1) (q^{12} z-1)} \full[0.35]\\
 -\frac{\qn(6) \qn(10) \qn(15)^2 (z-1) (q^{10} z-1) (q^{74} z^2-q^{68}
   z^2+q^{62} z^2-q^{54} z^2-q^{54} z+q^{48} z^2+q^{48} z-q^{46} z^2-q^{46} z+q^{44} z^2-q^{42} z+q^{40}}{}
\\\frac{
   z^2+q^{40} z-q^{38} z^2-q^{36}+q^{34} z+q^{34}-q^{32} z+q^{30}-q^{28} z-q^{28}+q^{26} z+q^{26}-q^{20}
   z-q^{20}+q^{12}-q^6+1)}{q^{10} \qn(3) \qn(5) \qn(30)^2 (q^2 z-1)
   (q^{12} z-1) (q^{20} z-1) (q^{30} z-1)} \cupcap[0.35]\\
 -\frac{\qn(15) (z-1) (q^{10} z-1) (q^{52} z+q^{50} z+q^{48} z+q^{44} z-q^{42} z-q^{38}
   z+q^{36} z-q^{34} z-q^{34}-q^{32}-q^{30} z-q^{30}+q^{28} z-q^{24}+q^{22} z+q^{22}+q^{20} z+q^{18}
   z+q^{18}-q^{16}+q^{14}+q^{10}-q^8-q^4-q^2-1)}{q^{14} \qn(3) \qn(30) (q^2 z-1)
   (q^{12} z-1) (q^{20} z-1)} \vertical[0.35]\\
 -\frac{\qn(15) (z-1) (q^{46} z+q^{44} z+q^{42} z+q^{36} z-q^{34} z-q^{34}-q^{32}-q^{30}
   z-q^{30}+q^{28} z-q^{26} z-q^{26}+q^{24}-q^{22} z+q^{20} z+q^{20}-q^{18}+q^{16} z+q^{16}+q^{14} z+q^{12}
   z+q^{12}-q^{10}-q^4-q^2-1)}{q^{16} \qn(3) \qn(30) (q^2 z-1) (q^{12}
   z-1)} \horizontal[0.35]\\
+ \frac{q^2 \qn(5) (z-1) (q^{10} z-1)}{\qn(6) \qn(10) (q^2 z-1)
   (q^{12} z-1)} \mysquare[0.35]%\\
+ \frac{(q^2-1)^3 \qn(6) \qn(10) \qn(15) z (q^{30} z+1)}{q \qn(30)
   (q^2 z-1) (q^{12} z-1) (q^{20} z-1)} \left(\leftid[0.35]+\rightid[0.35]\right)\\
+ \frac{\qn(6) \qn(10) \qn(15) (z-1) (q^{46} z^2+q^{44} z^2-q^{40} z^2-q^{38} z^2-q^{36}
   z^2-q^{36} z-q^{34} z+q^{32} z^2+q^{30} z^2+q^{30} z+q^{16} z+q^{16}+q^{14}-q^{12} z-q^{10}
   z-q^{10}-q^8-q^6+q^2+1)}{q^6 \qn(2) \qn(3) \qn(5) \qn(30) (q^2 z-1)
   (q^{12} z-1) (q^{20} z-1)} \left(\diagida[0.35]+\diagidb[0.35]\right)\\
%+ \frac{\qn(6) \qn(10) \qn(15) (z-1) (q^{46} z^2+q^{44} z^2-q^{40} z^2-q^{38} z^2-q^{36}
%   z^2-q^{36} z-q^{34} z+q^{32} z^2+q^{30} z^2+q^{30} z+q^{16} z+q^{16}+q^{14}-q^{12} z-q^{10}
%   z-q^{10}-q^8-q^6+q^2+1)}{q^6 \qn(2) \qn(3) \qn(5) \qn(30) (q^2 z-1)
%   (q^{12} z-1) (q^{20} z-1)} \\
% \frac{(q-1)^3 (q+1)^3 \qn(6) \qn(10) \qn(15) z (q^{30} z+1)}{q \qn(30)
%   (q^2 z-1) (q^{12} z-1) (q^{20} z-1)} \\
+ \kappa^{-1}\frac{(q^2-1)^3 q^{14}\qn(6) \qn(10) \qn(15) (z-1) z}{\qn(30) (q^2 z-1)
   (q^{12} z-1) (q^{20} z-1)} \left(\Yaa[0.35]%\\
%+ \frac{(q-1)^3 q^{14} (q+1)^3 \qn(6) \qn(10) \qn(15) (z-1) z}{\qn(30) (q^2 z-1)
%   (q^{12} z-1) (q^{20} z-1)} 
+\Yba[0.35]\right)%\\
+ \kappa\frac{(q^2-1) q^{16} (z-1) z}{(q^2 z-1) (q^{12} z-1) (q^{20} z-1)} \left(\Yab[0.35]%\\
%+ \frac{(q-1) q^{16} (q+1) (z-1) z}{(q^2 z-1) (q^{12} z-1) (q^{20} z-1)} 
+\Ybb[0.35]\right)\\
 -\kappa^{-2}\frac{(q^2-1)^5 q^{27} \qn(6)^2 \qn(10)^2 \qn(15)^2 (z-1) z (z+1)}{\qn(30)^2
   (q^2 z-1) (q^{12} z-1) (q^{20} z-1) (q^{30} z-1)} \cupa[0.35]%\\
 -\kappa^2\frac{(q^2-1) q^{31} (z-1) z (z+1)}{(q^2 z-1) (q^{12} z-1) (q^{20} z-1)
   (q^{30} z-1)} \capa[0.35]\\
+ \frac{\qn(6) \qn(10) \qn(15) (q^{80} z^2+q^{78} z^4-q^{78} z^3+q^{76} z^4-q^{76}
   z^3-q^{76} z^2-q^{74} z^2-q^{72} z^4+q^{72} z^3-q^{70} z^4+q^{70} z^3-q^{68} z^4+q^{68} z^2-q^{66} z^3+2 q^{66}
   z^2+q^{64} z^4-q^{64} z^3+q^{64} z^2+q^{62} z^4-q^{62} z^2}{}
\\\frac{-q^{60} z^2+q^{48} z^2-q^{48} z-q^{46} z^3+2 q^{46}
   z^2-q^{46} z-q^{44} z^3+q^{44} z^2-q^{42} z^2+q^{42} z+q^{40} z^3-2 q^{40} z^2+q^{40} z+q^{38} z^3-q^{38}
   z^2+q^{36} z^2-q^{36} z-q^{34} z^3+2 q^{34} z^2-q^{34} z-q^{32} z^3+q^{32} z^2}{}
\\\frac{
-q^{20} z^2-q^{18}   z^2+q^{18}+q^{16} z^2-q^{16} z+q^{16}
+2 q^{14} z^2-q^{14} z+q^{12} z^2-q^{12}+q^{10} z-q^{10}+q^8 z-q^8-q^6
   z^2-q^4 z^2-q^4 z+q^4-q^2 z+q^2+z^2)}{q^8 \qn(2) \qn(3) \qn(5) \qn(30) (q^2
   z-1) (q^{12} z-1) (q^{20} z-1) (q^{30} z-1)} \myempty[0.35] \\
\end{gather*}%
}%
\end{landscape}

\section{Rational limit}\label{app:rat}
The rational $R$-matrix is obtained from the trigonometric one in the correlated limit $q,z\to 1$:
\[
q=e^{\epsilon\hbar/2}
\qquad
z=e^{\epsilon x}
\qquad
\epsilon\to 0
\]
Representation-theoretically, it corresponds to the limit from the quantized loop algebra $\Uq(\e[z^\pm])$ (i.e., the quotient
of $\Uq(\ee)$ by $\prod_{i=0}^8 k_i^{n_i}=1$) to the Yangian $\mathcal Y(\e)\cong \Uq(\e[z])$.
In this appendix, we show briefly how to recover the results of \cite{Koepsell-Ye8}.

The main simplification in the diagrammatic calculus (see the related discussion at the start of \S\ref{sec:R0})
is that undercrossings and overcrossings become indistinguishable:
\[
\crossing=\invcrossing=:\flatcrossing
\]
and both become equal to the naive permutation of tensors of $V\otimes V$. The reader is invited
to write the simplified relations that
diagrams satisfy in this limit; we only provide one example:
\begin{multline*}
\mysquare=12\left(\full+\flatcrossing+\cupcap\right)
\\
+10\left(\vertical+\horizontal\right)
\end{multline*}

The expression of the $R$-matrix in \S\ref{app:main} becomes in the rational limit:
\begin{align*}
\check R_{\mathrm{rat}}(x)
&=\frac{\hbar-x}{\hbar+x}\full
-\frac{x (5 \hbar+x) (9 \hbar+x) (16 \hbar+x)}{(\hbar+x) (6 \hbar+x) (10 \hbar+x) (15 \hbar+x)}\cupcap
\\
&-\frac{x (5 \hbar+x) (44 \hbar+5 x)}{6 (\hbar+x) (6 \hbar+x) (10 \hbar+x)}\vertical
-\frac{x (31 \hbar+5 x)}{6 (\hbar+x) (6 \hbar+x)}\horizontal
\\
&+\frac{x (5 \hbar+x)}{12 (\hbar+x) (6 \hbar+x)}\mysquare
\\
&+\frac{60 \hbar^3}{(\hbar+x) (6 \hbar+x) (10 \hbar+x)}\left(\leftid+\rightid\right)
\\
&+\frac{x (4 \hbar+x) (11 \hbar+x)}{(\hbar+x) (6 \hbar+x) (10 \hbar+x)}\left(\diagida+\diagidb\right)
\\
&+\frac{\sqrt{30} \hbar^2 x}{(\hbar+x) (6 \hbar+x) (10 \hbar+x)}\left(\Yaa+\Yba+\Yab+\Ybb\right)
\\
&-\frac{60 \hbar^3 x}{(\hbar+x) (6 \hbar+x) (10 \hbar+x) (15 \hbar+x)}\left(\cupa+\capa\right)
\\
&+\frac{
%900 \hbar^4+660 \hbar^3 x+269 \hbar^2 x^2+30 \hbar x^3+x^4
60 \hbar^3 (30 \hbar+x)- (\hbar-x) (6 \hbar+x)(10 \hbar+x) (15 \hbar+x)
}{(\hbar+x) (6 \hbar+x) (10 \hbar+x) (15 \hbar+x)}
\myempty
\end{align*}

The alternative basis of \S\ref{sec:R0} no longer makes sense because $\crossing[0.5]$ and $\invcrossing[0.5]$ are degenerate; we can however use
an intermediate basis, in which the $R$-matrix becomes
\begin{align*}
\check R_{\mathrm{rat}}(x)=&\frac{1}{(\hbar+x)(6\hbar+x)}
\left(
6 \hbar^2\full
+
\frac{6 \hbar^2 x (5 \hbar+x)}{(10 \hbar+x) (15 \hbar+x)}\cupcap
\right.\\
&+\left.
\frac{\hbar x (5 \hbar+x)}{10 \hbar+x}\vertical
-\hbar x\horizontal
+x (5 \hbar+x)\flatcrossing\right)+\cdots
\end{align*}
the other terms (involving empty vertices) remaining the same.

It is not hard to see that the latter result coincides with that of 
\cite{Koepsell-Ye8} (see the equation between (3.9) and (3.10)), being careful that the expression there
is $R(w)=P\check R_{\textrm{rat}}(x=-w)$ where $P$ is permutation of factors, on condition that the following
normalizing factor is used:
\[
f(w)=\frac{w(w-5\hbar)(w-9\hbar)}{(w-\hbar)(w-6\hbar)(w-10\hbar)}
\]
with the substitution $i\hbar\to\hbar$ and with the free parameter $\alpha$ (which is related to our
parameter $\kappa$) given the value
\[
\alpha=\frac{1}{2\sqrt{15}}
\]

% to fix MR issues
\gdef\MRshorten#1 #2MRend{#1}%
\gdef\MRfirsttwo#1#2{\if#1M%
MR\else MR#1#2\fi}
\def\MRfix#1{\MRshorten\MRfirsttwo#1 MRend}
\renewcommand\MR[1]{\relax\ifhmode\unskip\spacefactor3000 \space\fi
\MRhref{\MRfix{#1}}{{\scriptsize \MRfix{#1}}}}
\renewcommand{\MRhref}[2]{%
\href{http://www.ams.org/mathscinet-getitem?mr=#1}{#2}}
\bibliographystyle{amsalphahyper}
\bibliography{biblio}

\end{document}